\newcommand{\xxx}{\hspace*{1.5em}}
\renewcommand{\emph}[1]{\textbf{\textit{#1}}}
\newcommand{\blue}[1]{{\color{blue}#1}}
\newcommand{\sandy}[1]{\ifthenelse{\boolean{PrintComment}}{\blue{Sandy: #1}}{}}
\newcommand{\shion}[1]{\ifthenelse{\boolean{PrintComment}}{\blue{Shion: #1}}{}}
\newcommand{\mike}[1]{\ifthenelse{\boolean{PrintComment}}{\blue{Mike: #1}}{}}
\title{Quantum Combine and Conquer and Its
Applications to Sublinear Quantum Convex Hull and Maxima Set Construction}
\titlerunning{Quantum Combine and Conquer}
\author{Shion Fukuzawa}{University of California, Irvine, USA \and \url{https://www.shionfukuzawa.com/}}{fukuzaws@uci.edu}{}{}
\author{Michael T. Goodrich}{University of California, Irvine, USA \and \url{https://ics.uci.edu/~goodrich/}}{goodrich@uci.edu}{https://orcid.org/0000-0002-8943-191X}{}
\author{Sandy Irani}{University of California, Irvine, USA \and \url{https://ics.uci.edu/~irani}}{irani@ics.uci.edu}{}{}
\authorrunning{Fukuzawa, Goodrich, and Irani}
\keywords{quantum computing, computational geometry, divide and conquer, convex hulls, maxima sets}
\begin{document}

\maketitle

\setcounter{page}{0}
\begin{abstract}
% Divide-and-conquer algorithms are performed in three key steps: 1. Divide, 2. Conquer, and 3. Combine. 
We introduce a quantum algorithm design paradigm 
called \emph{combine and conquer}, 
which is a quantum version of the ``marriage-before-conquest''
technique of Kirkpatrick and Seidel.
In a quantum combine-and-conquer algorithm,
one performs the essential
computation of the combine step of a quantum divide-and-conquer 
algorithm prior to the conquer step while avoiding recursion.
This model is better suited for the quantum setting, 
due to its non-recursive nature. 
We show the utility of this approach by providing quantum algorithms
for 2D maxima set and convex hull problems
for sorted point sets running in $\tilde{O}(\sqrt{nh})$ time, w.h.p.,
where $h$ is the size of the output. 
% This is an improvement over the previously known $\tilde{O}(h\sqrt{n})$ 
% quantum algorithms for unsorted inputs; hence,
% we achieve sublinear running times for $h$ up to almost $n$,
% for sorted inputs (which are common in
% computational geometry algorithms), whereas previous algorithms had sublinear
% running times only for $h$ up to almost $\sqrt{n}$ for unsorted
% inputs.
% We feel that the quantum combine-and-conquer technique may be useful for solving
% other problems efficiently in quantum computing models.
\end{abstract}

% \clearpage
\section{Introduction}

In classical computing, \emph{divide and conquer} is an algorithm design
pardigm that is usually described in terms of the following three steps:

\begin{enumerate}
\item
\emph{Divide}: divide a problem into two
or more subproblems.
\item
\emph{Conquer}: solve the subproblems, typically using recursion.
\item
\emph{Combine}: 
combine subproblem solutions into a solution 
to the original problem. 
\end{enumerate}

% This technique has 
% proven useful for efficiently solving a host of different
% problems, including sorting, matrix multiplication, and the Fourier
% transform (e.g., see~\cite{clrs,goodrich,tardos}).
Perhaps because of the wide applicability of algorithm design paradigms,
like divide and conquer,
there is interest in adapting classical  paradigms to the quantum setting.
For example, Childs, Kothari, Kovacs-Deak, Sundaram, and Wang~\cite{childs}
describe an adaptation of the divide-and-conquer technique
to quantum computing that in some cases results in query complexities
better than what is possible classically.
Likewise, Allcock, Bao, Belovs, Lee, and Santha~\cite{allcock}
study the time complexity of a number of quantum 
divide-and-conquer algorithms, establishing conditions under which 
search and minimization problems with classical divide-and-conquer 
algorithms are amenable to quantum speedups.
%In addition,
Akmal and Jin~\cite{akmal}
adapt classical divide-and-conquer approaches for string problems 
to the quantum setting.
There is also related work by many others; see, e.g.,
\cite{araujo2021divide,li,wang2022note,Chen2024,cameron2024,gong2024quantum,tomesh}.
% Araujo, Park, Petruccione, and da~Silva~\cite{araujo2021divide},
% Li, Alam, and Ghosh~\cite{li}, 
% Wang~\cite{wang2022note},
% Chen~\cite{Chen2024},
% Cameron, Tomesh, Saleem, and Safro~\cite{cameron2024},
% Gong, Ding, Li, Wang, and Zhou~\cite{gong2024quantum},
% and Tomesh, Saleem, Perlin, Gokhale, Suchara, and Martonosi~\cite{tomesh}.

A well-known problem that can be solved classically
using the divide-and-conquer paradigm is the
\emph{convex hull} problem;
e.g., see Seidel~\cite{seidel2017convex}.
In the two-dimensional version of this 
problem, one is given a set, $S$, of $n$ points in the plane and asked to
output a representation of the smallest convex polygon that contains
the points in $S$.
(See Figure~\ref{fig:convex}.)

\begin{figure}[hbt]
\centering
%\vspace*{-12pt}
\includegraphics[width=.5\textwidth]{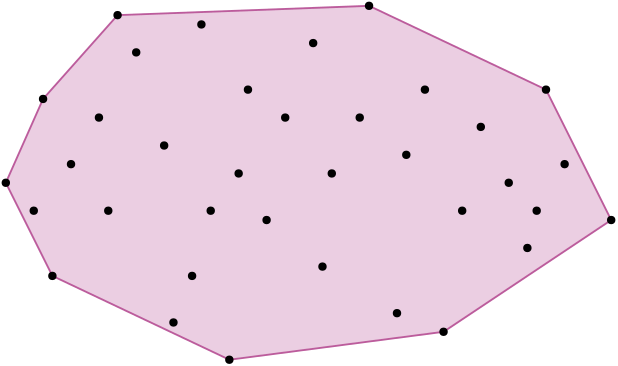}
\caption{\label{fig:convex} A two-dimensional convex hull. }
\end{figure}

% \begin{figure}[hbt]
% \centering
% %\vspace*{-12pt}
% \includegraphics[width=2.75in, trim = 1.2in 2.5in 7.9in 1.6in, clip]{}
% \caption{\label{fig:convex} A two-dimensional convex hull.}
% \end{figure}

% One interesting property of the convex hull problem is
% that the output size, $h$, can range from $3$ (when the
% convex hull is a triangle) to $n$ (when all the points of $S$ 
% are on the boundary of the convex hull).
Asymptotically fast classical convex hull algorithms
are \emph{output sensitive}, meaning that their running times depend
on both the input size, $n$, and output size, $h$, 
and there are well-known examples that run in
$O(n\log h)$ time; e.g., see Kirkpatrick and Seidel~\cite{kirkpatrick},
Chan~\cite{chan}, and Afshani, Barbay, and Chan~\cite{afshani2017instance}.
In this paper, we are interested in studying how to
efficiently adapt classical output-sensitive divide-and-conquer algorithms,
e.g., for convex hull and related problems, to quantum
computing models.

\bigskip
\textbf{Additional Related Work.} \hspace*{1em}
There is considerable interest in quantum algorithms
for solving computational geometry problems.
Sadakane, Sugawara, and Tokuyama~\cite{sadakane2002quantum}
give a quantum convex hull algorithm that runs in $\tilde{O}(h\sqrt{n})$
time,\footnote{We use $\tilde{O}(f(n))$ to denote $O(f(n)\log^c n)$,
   for some constant $c\ge 0$.}
and this time bound is also achieved by 
Wang and Zhou~\cite{Wang_2021}.
Note that if $h=n$, as can occur, for instance, with points distributed
on a circle or parabola, then these algorithms run in 
$\tilde{O}(n^{3/2})$ time, which is significantly slower than the best
classical algorithms.
Furthermore, the above quantum algorithms make use of a quantum input data model that assumes that the points are provided in superposition. 
Many of the procedures require sampling as well, meaning that the algorithms require multiple copies of these input states. 
We present the algorithm in a standard quantum query model that allows for an even comparison with the existing literature on the subject. In this model, we assume the existence of an oracle that returns the item in the queried index.
For completeness, Cortese and Braje \cite{cortese2018} showed that there exists a quantum circuit which can take classical data and implement this oracle, with the cost of a logarithmic overhead if the circuit construction time is excluded.
Lanzagorta and Uhlmann~\cite{lanzagorta2004quantum,lanzagorta2010quantum} 
also describe
a quantum convex hull algorithm running in $\tilde{O}(h\sqrt{n})$ time, 
which, in their case, is a quantum implementation of the well-known 
Jarvis march
convex hull algorithm using Grover search for the 
inner loop; see, e.g., \cite{deberg,seidel,preparata,orourke,grover1996fast}.
They also describe an algorithm that runs in $\tilde{O}(\sqrt{nh})$ time, 
but this result only applies when $h$ is small
and the points are chosen uniformly at random from a bounded 
convex region.
In addition to this quantum prior work, 
there is a well-developed body of 
prior work in classical computing models
for computing convex hulls where the points are given
pre-sorted in lexicographic order; see, e.g.,
Ghouse and Goodrich~\cite{ghouse1991place},
Hayashi, Nakano, and Olarlu~\cite{hayashi},
Berkman, Schieber, and Vishkin~\cite{berkman},
Nakano~\cite{nakano}, 
Goodrich~\cite{GOODRICH1993267},
and Nakagawa, Man, Ito, and Nakano~\cite{nakagawa}. 
The question we seek to answer in this paper is whether it is possible to leverage this assumption in the quantum setting and obtain asymptotically faster output-sensitive algorithms
in the model where the input point set is pre-sorted but with no other assumptions about the distribution of points.
% \sandy{I reorganized a bit and added this last sentence.}
% \sandy{If the model in which data comes to the algorithm pre-sorted is a well-studied model in the comp. geom. literature, I think we should mention that here. It appears in the next section and looks like it's something we made up so that we could get a better bound. If it is a well-studied model and we justify that, this would be a good place to then state the main problem: It's natural then to ask if having pre-sorted input points can be leveraged in the quantum setting to yield a more efficient algorithm. }

\bigskip
\textbf{Our Results.} \hspace*{1em}
In this paper, we introduce \emph{quantum combine and conquer}, which
is a novel quantum divide-and-conquer paradigm where we perform the 
combine step \emph{before} the conquer step and avoid recursion.
Our combine-and-conquer paradigm provides
a quantum analogue to a classical divide-and-conquer 
variation that Kirkpatrick and Seidel 
call ``marriage-before-conquest''~\cite{kirkpatrick3,kirkpatrick2,kirkpatrick}.
We show that using this paradigm it is possible to compute the convex hull of a set of points in the plane in $\tilde{O}(\sqrt{nh})$ time given that the input is presorted in lexicographic order. 
As the sorting problem can be reduced to computing the convex hull, 
and there is no quantum speed-up for sorting \cite{Hoyer2001},
we don't expect quantum algorithms to outperform classical algorithms for this problem in the general case, but this result shows that some reasonable conditions can provide significant speedups. 
% \sandy{Is $\tilde{O}(\sqrt{hn})$ a quadratic speed up over $O(n \log h)$? Perhaps we should be specific about the speed up. We are competing against $O(n \log h)$ (classical) and $O(\sqrt{n} h)$ (quantum) without the pre-sorted assumption. For what values of $h$ do we do better and what is the speed up?}
For example, for inputs where $h$ is $O(\log n)$, such as is expected
for $n$ points uniformly distributed in a convex polygon with
a constant number of 
sides~\cite{Har_Peled_2011_expected_size_of_random_convex_hull},
then our method achieves an $\tilde{O}(n^{1/2})$ running time.
Similarly, 
for inputs where $h$ is $O(n^{1/3})$, such as is expected
for $n$ points uniformly distributed in a disk~\cite{Har_Peled_2011_expected_size_of_random_convex_hull, Raynaud_1970_expected_size_convex_hull_disk}, 
then our method achieves an $\tilde{O}(n^{2/3})$ running time.
In general, our algorithms achieve sublinear running times
for sorted input sets for $h$ up to almost $n$.
Moreover, our results do not depend on any assumptions about the input
points coming from a given distribution, such as uniformly distributed
points in a convex region. Indeed, our results hold for adversarial chosen
point sets, such as $n$ points on a circle.

To introduce the combine-and-conquer paradigm, we first apply it to 
constructing the maxima set of a set of 
lexicographically
sorted points in the plane.
% In the maxima set problem, we are given a set, $S$, of $n$ points in the plane and are tasked with finding the subset of points in $S$ that are not dominated by any other points in $S$. We say that a point $p$ \textbf{dominates} a point $q$ if $p.x > q.x$ and $p.y > q.y$.
% We assume that the set of points is presorted (e.g., by $x$-coordinate).
% Note that the partial order that determines which points are maxima points, is different
% than the lexicographic order of the input set. 
Our quantum maxima set algorithm runs in $\tilde{O}(\sqrt{nh})$ time, 
where $h$ is the size of the output.
% For example, if $h$ is polylogarithmic, as is expected if all relative
% orderings are equally likely~\cite{bent78}, 
% then our algorithm runs in $\tilde{O}(n^{1/2})$ time.
The key idea to our technique is that there is information that can first be computed globally (the combine step), which partitions the computation into the smaller blocks 
that then use this information to finish the computation locally and 
(in a conquer step) without recursion. 

We then apply our quantum combine-and-conquer paradigm to
the convex hull problem. In this case, the combine step is more complex,
in that it includes multiple reductions to two-dimensional linear programming. 
Nonetheless, we
show that the combine steps can still be performed before the conquer
steps in this case.
As a result, we give
a quantum convex hull combine-and-conquer algorithm that runs
in $\tilde{O}(\sqrt{nh})$
time, where $n$ is the number of (sorted) input points and $h$ is 
the size of the output.
The quantum combine-and-conquer
technique provides a novel algorithm design paradigm that may be useful for designing more quantum algorithms that use similar intuition as classical algorithms.

We compare our bounds to previous quantum convex hull algorithms
in Table~\ref{tab:survey}.
In 
our algorithm's input model, the data is encoded by a unitary, such that
if the input is a list of points, $[p_0, p_1, \ldots, p_{n-1}]$, then we assume access to a unitary $U_{in}$ that maps an index $i$ to the corresponding element in the array $p_i$: $\ket{i}\ket{0} \overset{U_{in}}{\rightarrow} \ket{i}\ket{p_i}$.
The existing literature on quantum solutions to this problem assume that the data is given in a uniform superposition state, $(1/\sqrt{n}) \sum_{i} \ket{i}\ket{p_i}$. 
Access to the unitary $U_{in}$ is a somewhat stronger assumption in 
that the uniform superposition state can be prepared in unit time 
given access to~$U_{in}$.
On the other hand, the most natural way to prepare such a state 
is via access to~$U_{in}$, as we do.

\begin{table}[htb]
\begin{center}
% \begin{tabular}{|m{0.3\linewidth}|m{0.4\linewidth}|m{0.15\linewidth}|}
\begin{tabular}{|c|c|c|}
\hline
                             & \textbf{Input point set assumptions}                  & \textbf{Runtime} 
\\ \hline
%
% add invisible lines (rules) for vertical spacing
%
Sadakane, Sugawara, and Tokuyama~\cite{sadakane2002quantum}
& Unsorted, arbitrary point set
& 
\rule[-6pt]{0pt}{17pt} 
$\tilde{O}(h\sqrt{n})$ \\ \hline
Wang and Zhou~\cite{Wang_2021}
& Unsorted, arbitrary point set
& 
\rule[-6pt]{0pt}{17pt} 
$\tilde{O}(h\sqrt{n})$ \\ \hline
Lanzagorta and Uhlmann~\cite{lanzagorta2004quantum,lanzagorta2010quantum} 
& Small $h$, uniformly distributed points       
&    
\rule[-6pt]{0pt}{17pt} 
$\tilde{O}(\sqrt{nh})$                    \\ \hline
\textbf{This work}                    & Sorted, arbitrary point set 
& 
\rule[-6pt]{0pt}{17pt} 
$\tilde{O}(\sqrt{nh})$ \\ \hline
\end{tabular}
\end{center}
\caption{A summary of quantum algorithms to compute the convex hull of a set of $n$ points, where the output has $h$ edges. 
The algorithms all assume access to a data loading unitary.}
\label{tab:survey}
\end{table}

\section{Preliminaries}

% We will be interested in functions with the form $f : \mathbb{Z}^n \rightarrow \mathbb{Z}$, generalizing the algorithm design .

% In this section, we discuss preliminaries.
We assume basic familiarity with quantum computing;
see, e.g., Nielsen and Chuang~\cite{nielsen2010quantum}.
A \textbf{quantum state} over $n$ qubits is a unit vector of length $2^n$ with complex entries. The computational basis $\{\ket{j}\}_{j \in [0,\ldots,2^n-1]}$ is a basis over quantum states where $\ket{j}$ represents the unit vector of length $n$ with a 1 in the $j$-th index and 0 elsewhere. A computational basis state can be used as a classical bit string to simulate any classical algorithm. We can express any quantum state as a weighted sum of these classical basis states,
\begin{equation}
    \ket{\Psi} = \sum_{j = 0}^{2^n - 1} \alpha_j\ket{j}. 
\end{equation}
If at least two $\alpha_j$ in the above are nonzero, we say the state is in \textbf{superposition}. 
A \textbf{measurement} of the state $\ket{\Psi}$ will collapse the state to $\ket{j}$ with probability $|\alpha_j|^2$. The measurement is destructive in the sense that once it is collapsed, there is no way to go back to $\ket{\Psi}$ without running the circuit to prepare that state. The state where all $\alpha_j = \frac{1}{\sqrt{2^n}}$ can be prepared by a circuit of depth 1. 

% There is work \shion{Include reference to unitary complexity} studying the behavior of quantum algorithms depending on the form of input that is allowed. 

We will assume an input model where the input data $[p_0, p_1, \ldots, p_{n-1}]$ is accessible by a unitary $U_{in}$ that maps an index $i$ to the corresponding element in the array $p_i$. Since quantum operations must be reversible, it is standard to model the action of this unitary by using an index and data register as $\ket{i}\ket{0} \overset{U_{in}}{\rightarrow} \ket{i}\ket{p_i}$.
We can use linearity to examine the action of this unitary to a state in superposition 
\begin{equation}\label{eq:qram-state}
    U_{in}\left(\sum_{j = 0}^{2^n - 1} \frac{1}{\sqrt{2^n}} \ket{j}\ket{0}\right) = \sum_{j = 0}^{2^n - 1} \frac{1}{\sqrt{2^n}}(U_{in}\ket{j}\ket{0}) = \frac{1}{\sqrt{2^n}}\sum_{j = 0}^{2^n - 1} \ket{j}\ket{p_j},
\end{equation}
from which the data point $p_j$ can be retrieved with probability $|\alpha_j|^2$ upon measurement of the index register. 
Therefore, in this model, we assume that a quantum state representing a distribution of our inputs is preparable in time proportional to the time it takes to prepare the distribution. 
A transformation from the $n$-bit zero string $\ket{0^n}$ to a uniform superposition over all $n$-bit strings can be accomplished by a depth 1 circuit where a Hadamard gate is applied to each qubit in parallel. 

Much of the existing literature on this subject assumes access to multiple copies of the state in (\ref{eq:qram-state}). Our assumption of access to $U_{in}$ is at least as strong of an assumption as having multiple copies of the state. At the same time, it is a standard assumption for states in the form of (\ref{eq:qram-state}) are generated using access to a unitary like $U_{in}$. 
We therefore compare the performance of our algorithm against the existing literature under equal access to $U_{in}$, and we summarize the performance of each in Table \ref{tab:survey} under this model. 

% There exists a circuit due to Cortese and Braje \cite{cortese2018} to load $n$ data elements into a uniform superposition of states with depth $\log n$. One could consider a model for the implementation $U_{in}$ in which this circuit is given or precomputed, leading to the final time complexity picking up an extra factor of $\log n$.

\textbf{Quantum Subroutines.}\hspace*{1em}
In this section we discuss the two primary quantum subroutines that will be used throughout the paper. The quantum computing components of our algorithms
are limited to invocations of these subroutines, and the remainder of the algorithms are done through classical post-processing. 

The first subroutine is for preparing superposition states of subsets of points. Our algorithms use the fact that the data $S = [p_0, p_1, \ldots, p_{n-1}]$ is encoded as a sorted list in $U_{in}$, and we need to prepare $h$ states $\ket{S_0}, \ket{S_1}, \ldots, \ket{S_{h-1}}$ where $\ket{S_j}$ is a uniform superposition of $n/h$ consecutive elements in the array starting at index $jn/h$ and $h$ is a power of 2. Preparation of the $\ket{S_j}$ begins with a register storing an $n$-bit zero string $\ket{0^n}$, and another register large enough to store a point from the dataset, each of which we will call the index and data register respectively. The first step is to set the first $\log_2 h$ bits of the index register to $j$. Next, take the remaining $n - \log_2 h$ bits in the index register and prepare a uniform superposition state over the bitstrings of length $n - \log_2 h$. Finally, apply $U_{in}$ on this state and the data register to prepare the superposition of $n/h$ states starting from $jn/h$, 
\begin{equation}\label{eq:state-in-block-j}
    \ket{S_j} = \frac{1}{\sqrt{n/h}}\sum_{k=0}^{n/h - 1} \ket{jn/h+k}_{I_j} \ket{p_{jn/h+k}}_{D_j}.
\end{equation}
We summarize this process in Algorithm~\ref{alg:prep-j}.
Throughout this paper, we use $S_j$ to represent the subset of $S$ storing points in the $j$-th block, and $\ket{S_j}$ to be the quantum state that encodes this set, as seen in Equation~(\ref{eq:state-in-block-j}). 

\begin{algorithm}[ht]
\caption{\texttt{qPrep}($j$, $h$)}
% \caption{PartitionInput($U_{in}$, $h$)}
\label{alg:prep-j}
\begin{algorithmic}[1]
\Require $j$: The index of the block to prepare; $h$: the number of blocks to partition into.
% \Require $h$: the number of blocks to partition into.
\Require Global access to the unitary $U_{in}$ that encodes the sorted data $[d_0, \ldots, d_{n-1}]$
\State Prepare two registers, initialized to $\ket{0^n}_{I_j}\ket{0}_{D_j}$. 
\State Set the first $\log_2 h$ bits of $I_j$ to $j$
\State Apply a Hadamard gate to each remaining qubits in $I_j$. 
\State Apply $U_{in}$ to the pair of registers. 
\State \Return $\ket{S_j}$, the state 
% in the $j$-th register 
after the previous step as described in equation \ref{eq:state-in-block-j}. 
\end{algorithmic}
\end{algorithm}

% In our algorithms, we will be using two key subroutines stated below. 
% The first, the quantum data loading circuit,
% is described in detail by Cortese and Braje~\cite{cortese2018}
% for the case with bits as data; hence, we describe their theorem using
% similar notation: 

% \begin{theorem}[Quantum Data Loading \cite{cortese2018}]\label{thm:data-loading}
%     Let $D = [d_0, \ldots, d_{n-1}]$ be a list of $n$ elements that can be specified using $w$ bits 
%     % \sandy{Do you mean that each data item can be specified with $w$ bits?} 
%     provided as sequential classical data. Then, there is a quantum circuit that takes as input the array $D$ and prepares the state 
%     \begin{equation}
%         \ket{\psi} = \frac{1}{\sqrt{n}}\sum_{i = 0}^{n - 1} \ket{i}\ket{d_i}
%     \end{equation}
%     in $O(\log n \cdot \log w)$ time. 
% \end{theorem}

The second quantum subroutine we use is 
% a block-restricted adaptation of 
a quantum max/min finding algorithm due to
Durr and Hoyer~\cite{durr1999}. 
This algorithm takes as input a Boolean function, 
$f : D \rightarrow \{0, 1\}$, a comparator, $\preceq$, 
to maximize (or minimize) over, the index, $j$, 
of the block to perform the search over, 
and the total number of blocks, $h$. 
% \sandy{Here's an alternative explanation.}
Our subroutine uses \texttt{qPrep}$(j)$ to prepare the superposition state described in Theorem \ref{thm:min-finding}. This has the effect of applying the algorithm only to the points in block $j$, and allows us to search the block in $\tilde{O}(\sqrt{n/h})$ time.
% \sandy{end.}
% The algorithm is applied only to the qubits in \texttt{qPrep} where the Hadamard gates are applied, allowing us to search the block in $\tilde{O}(\sqrt{n/h})$ time. 
% \mike{Please explain how to adapt their method to be block-restricted.}
Throughout the text, we use the signature,
$
    \texttt{qMax/qMin($f$, $\preceq$, $j$, $h$)},
$
% \end{equation}
when calling this function, and the output is an element of $D$ or null if there are no elements such that $f(d) = 1$. The function $f$ can be passed as, say, a classical circuit which can be converted to a quantum circuit. 
% \sandy{If you use my explanation above, then you can remove this sentence:} The state to search over is constructed in the function by calling \texttt{qPrep}.

\begin{theorem}[Quantum Maximum/Minimum Finding \cite{durr1999}]\label{thm:min-finding}
    Let $D = [d_0, \ldots, d_{n-1}]$ be a list of $n$ elements represented by $w$ bits, and let $S$ be the time required to prepare the state,
    \begin{equation}
        \ket{\psi} = \frac{1}{\sqrt{n}}\sum_{i=0}^n \ket{i}\ket{d_i},
    \end{equation}
    and $Q$  the time it takes to query a boolean function $f: D \rightarrow \{0, 1\}$. Let $M$ be the subset of $D$ such that $f(m) = 1$ for all $m \in M$. Also, let $\preceq$ be some ordering of data values in the data register such that comparisons according to this ordering can be performed in $O(\log w)$ time. 
    Then we can find the maximum (or minimum) value of $M$ under the specified ordering in time $S\cdot Q \cdot \Tilde{O}( \sqrt{n})$.
\end{theorem}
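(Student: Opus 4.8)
The plan is to establish the bound by instantiating the quantum minimum-finding algorithm of Durr and H{\o}yer~\cite{durr1999} and tracking the extra cost factors $S$ and $Q$ explicitly. Their algorithm is built on the exponential-search variant of Grover's algorithm (Boyer--Brassard--H{\o}yer--Tapp, BBHT), which, starting from a uniform superposition, finds one of $t>0$ marked items in a search space of size $n$ using an expected $O(\sqrt{n/t})$ Grover iterations without knowing $t$ in advance, and which certifies $t=0$ after $O(\sqrt{n})$ iterations. The only departures from the textbook setting are (i) restricting the optimization to the subset $M=\{d : f(d)=1\}$, (ii) the general comparator $\preceq$, and (iii) keeping the state-preparation time $S$ and the query time $Q$ as explicit parameters rather than treating each oracle call as unit cost.

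First I would describe the procedure. Maintain a threshold index $y$ with $d_y \in M$, obtained initially by sampling indices uniformly at random until one satisfies $f$. Each stage runs BBHT on $\ket{\psi}$ against the combined predicate $g(i)=1 \iff f(d_i)=1 \text{ and } d_i \prec d_y$, which marks exactly the elements of $M$ that improve on the current best; a successful search returns a strictly better threshold, and we iterate. One Grover iteration consists of one application of the marking oracle---which reads $d_i$ from the data register, evaluates $f$ in time $Q$, and performs one $\preceq$-comparison in time $O(\log w)$---followed by the reflection about $\ket{\psi}$, implemented with the state-preparation unitary and its inverse in time $O(S)$. Hence each iteration costs $O(S + Q + \log w)$.

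The crux is the expected query count, which I would bound exactly as Durr and H{\o}yer do, via a ``records'' argument on ranks within $M$. Because the initial threshold is uniform over $M$, and BBHT, started from the uniform superposition, returns a uniformly random marked element, the threshold performs a chain of ranks $R_0 > R_1 > \cdots$ in which each step lands uniformly below the current rank (ranks taken within $M$, with rank $1$ the optimum). A short conditioning argument shows that the element of rank $r$ is ever used as a threshold with probability exactly $1/r$: whenever the chain first enters $\{1,\dots,r\}$ it does so at rank $r$ with conditional probability $1/r$, and once inside it never leaves. When the threshold has rank $r$ there are $r-1$ marked improving elements among the $n$ points, so BBHT spends an expected $O(\sqrt{n/(r-1)})$ iterations. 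Summing the per-stage cost against the visit probabilities gives $\sum_{r} \tfrac{1}{r}\,O\!\big(\sqrt{n/(r-1)}\big) = O(\sqrt{n})\sum_{r} r^{-3/2} = O(\sqrt{n})$ expected Grover iterations in total. Multiplying by the per-iteration cost $O(S+Q+\log w)$, and noting $S+Q = O(S\cdot Q)$ since $S,Q \ge 1$, yields the claimed $S \cdot Q \cdot \tilde{O}(\sqrt{n})$ running time once the $\log w$ term is absorbed into $\tilde{O}$.

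I expect the main obstacle to be precisely this probabilistic summation: confirming that the stage costs, which grow as the threshold rank $r$ shrinks, are exactly offset by the $1/r$ visit probabilities so that the series converges to $O(\sqrt{n})$ rather than incurring an extra logarithmic factor. A secondary issue is turning the expected-time guarantee into a high-probability one and handling the case $M = \emptyset$: I would cap each run at $c\sqrt{n}$ iterations so that, by Markov's inequality, it succeeds with constant probability, repeat $O(\log n)$ times, and keep the best threshold; a run that never marks any element certifies $M=\emptyset$ and returns null. The $O(\log n)$ repetitions and the $O(\log w)$ comparison cost are both absorbed into the $\tilde{O}$ notation, giving the stated bound with high probability.
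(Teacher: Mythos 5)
The paper does not actually prove this theorem---it imports it verbatim from D\"urr and H\o yer \cite{durr1999}---so the relevant comparison is with their original argument, and your reconstruction follows it faithfully in every essential respect: the threshold chain, the BBHT subroutine run against the combined predicate $g(i)=1 \iff f(d_i)=1 \wedge d_i \prec d_y$, the lemma that the rank-$r$ element of $M$ is ever used as a threshold with probability exactly $1/r$, the convergent sum $\sum_{r} r^{-1}\,O\bigl(\sqrt{n/(r-1)}\bigr) = O(\sqrt{n})$, and the per-iteration accounting $O(S+Q+\log w)$ with the reflection about $\ket{\psi}$ implemented by the preparation unitary and its inverse. These are exactly the right ingredients, and your Markov-plus-$O(\log n)$-repetition conversion from an expected-time to a high-probability guarantee, including the certification of $M=\emptyset$ by a capped run that marks nothing, is a standard and adequate patch for the ``w.h.p.'' usage the paper makes of this theorem.

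The one step that fails as written is the initialization. You propose to obtain the first threshold ``by sampling indices uniformly at random until one satisfies $f$.'' Classical rejection sampling needs an expected $n/|M|$ samples, each costing one data access and one evaluation of $f$, for an expected initialization time of $\Theta\bigl((n/|M|)(S+Q)\bigr)$; already for $S,Q = O(1)$ and $|M|=1$ this is $\Theta(n)$ against a claimed total budget of $\tilde{O}(\sqrt{n})$, and nothing later in your argument recovers from it. Note this matters precisely in the regime the paper uses the theorem, where $f$ may mark very few points of a block (e.g., the single endpoint of a bridge edge in Algorithm 5). The repair is immediate and leaves the rest of your analysis intact: either obtain the initial threshold by a capped BBHT search with predicate $f$ alone, which takes expected $O(\sqrt{n/|M|}) \le O(\sqrt{n})$ iterations and---crucially for your records argument---still returns an element uniform over $M$ because the Grover iteration keeps the marked amplitudes equal; or start from a sentinel threshold $+\infty$, so that the first stage's predicate degenerates to $f$ and the $1/r$ visit-probability lemma applies to the whole chain from rank $|M|$ downward. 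With that one fix your proof is correct and is, in substance, the D\"urr--H\o yer proof with the cost factors $S$ and $Q$ tracked explicitly.
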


% 

% {\color{green} Need to review what they mean by quantum model and make sure it is compatible with our setting. Dig into where the log factor comes from, and if we can safely apply it. Check the convex hull representation in other quantum geometric algorithms. We may be being more rigorous about this than others. }
% \begin{lemma}[Sadakane et al.]
%     A linear programming problem with $n$ linear constraints in the $d$-dimensional space can be solved in $O(\sqrt{n}\log^{2d - 1}n)$ time with high probability in the quantum model. In particular, the feasibility problem of a system of $n$ linear inequalities can be solved in $O(\sqrt{n}\log^{2d - 1} n)$ time. 
% \end{lemma}

% Consider a set $S = [p_0, p_1, \ldots, p_{n-1}]$ of $n$ points in the plane given in lexicographic order. A point $p_i \in S$ is called a \textbf{maximum point} in $S$ if there is no other point $p_j \in S$ such that $p_i \preceq_L p_j$. The set of all such maximum points is called the \textbf{maxima set} of $S$, and in this section we discuss an output sensitive quantum algorithm that returns the $h$ points in the maxima set of $S$. We first present our algorithm assuming that we know the number of points $h$ in the maxima set of $S$, and we later show how to extend this algorithm to the case where $h$ is not known.  

% \section{Algorithms}

\section{Quantum Maxima Sets}

In this section, we present a quantum algorithm to solve the maxima set problem using the combine-and-conquer paradigm. 
In the maxima set problem, we are given a set, $S$, of $n$ points in the plane and are tasked with finding the subset of points in $S$ that are not dominated by any other points in $S$. We say that a point $p$ \textbf{dominates} a point $q$ if $p.x > q.x$ and $p.y > q.y$.
Note that the output size, $h$, can be as small as $1$ or as large as $n$.

We assume that the input set is sorted in lexicographic order (first by
$x$-coordinates and then by $y$-coodinates in the case of ties), as discussed in the introduction. 

% \subsection{A Classical Divide-and-Conquer Maxima Set Algorithm}
Let us begin by reviewing a simple divide-and-conquer algorithm in 
the classical computing model,
which is  provided the set, $S$, of $n$ points as input in sorted order,
and returns the maxima set, $M$  (see Preparata and
Shamos~\cite{preparata}).
The algorithm splits $S$ into left and right sets, $S_1$
and $S_2$, and recursively finds the maxima sets of each.
Then it prunes away each point in $S_1$ with smaller $y$-coordinate
than the point, $p_{\rm max}$, in $S_2$ with maximum $y$-coordinate.
The resulting maxima set algorithm,
which we give in detail in an appendix, 
takes $\Theta(n\log n)$ time, even when $S$ is sorted~\cite{preparata}.
%see, e.g., Preparata and Shamos~\cite{preparata}.
% \sandy{Do you mean that the algorithm can take as long
% as $\Omega(n \log n)$?} Yes.
Kirkpatrick and Seidel~\cite{kirkpatrick2} show that if the combine
step of finding $p_{\rm max}$ 
is performed before the divide and conquer steps, and used to prune
points in recursive subproblems, then the running time
can be improved to $O(n\log h)$.

% \subsection{Our Quantum Combine-and-Conquer Maxima Set Algorithm}
For our output-sensitive quantum algorithm,
we also perform the combine step before the conquer step, but
our goal is to design a quantum 
algorithm that runs in $\tilde{O}(\sqrt{nh})$ time
without recursion,
which requires a more sophisticated approach. We first describe an algorithm where the output size, $h$, is known in advance and later show how to use this algorithm to solve the case where the output size is not known. 

At a high level, the combine-and-conquer approach performs the combine step first before conquering the subproblems. In the maxima set problem, this is achieved in the following way. 
We begin by dividing our set $S$ into $h$ different blocks using Algorithm \ref{alg:prep-j}. 
% , each containing $O(n/h)$ points, ordered left to right by $x$-coordinates. 
The key observation we use is that we can isolate the problem of finding the maxima set to each of the local blocks once we identify representative information from each block. This representative information is used globally, so we describe the process of finding such information as the \textbf{combine step}. The combine-and-conquer paradigm works for problems where, once the global information is identified, the blocks do not need to communicate with each other for the remainder of the algorithm. In the case of computing the maxima set, the combine step consists of searching for the set $T$ of the $h$ highest points in each block. Once the set $T$ is found, we classically compute the maxima set $M_T$ of $T$. Note that $M_T$ cannot contain more than $h$ points, and if a block does not share a point with $M_T$ no points in that block can be in the final output set. 
% \sandy{If you are denoting the overall output set by $M$,perhaps you should use a different letter here?} \shion{Updated this to use $M_T$.}
% \sandy{Check this sentence. It used to say "Once this is identified" and I wanted to be more specific about what is identified.} 
Once the points in $M_T$ are identified, we begin the \textbf{conquer step} of the algorithm to find the points in each block that appear in the maxima set. Block number $j$ uses information from two points identified in the combine step to restrict the search region. The first is the $y$ coordinate $T_j$ of the tallest point within the $j$-th block, which immediately removes all points to the left of it inside block $j$ out of the candidate set. The second is the $y$ coordinate $R_j$ of the highest point that appears in any of the blocks to the right of block $j$.
Note that $R_j$ is left-most point from $M_T$ that is to the right of block $j$.
%\sandy{I added this last sentence. The source of my confusion has been that the description of $R_j$ given here is not the same as in the algorithm. It turns out that they are the same.}
% \sandy{Check the next two sentences.}  
See Figure~\ref{fig:maxima-set}.

\begin{figure}[hbt!]
    \centering
    \includegraphics[width=.9\linewidth]{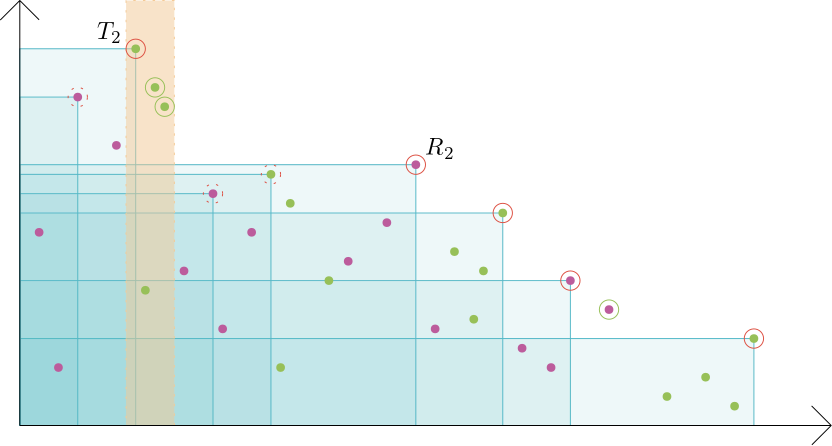}
    \caption{An example instance of maxima set where $n = 32$ and $h = 8$ solved using our quantum combine-and-conquer algorithm. Consecutive blocks of points are colored in alternating purple and green, and the set $T$ of tallest points in each group is circled. The maxima set of $T$ is indicated using solid circles, whereas tallest points that are dominated are circled in dotted lines. To illustrate an example of the conquer step, we focus attention on the group $S_2$ which is highlighted in orange. In this group, the tallest $y$ coordinate is indicated by $T_2$, and $R_2$ is the $y$ coordinate of the tallest point to the right of this group. Thus, the only points that need to be processed are the two points not bound by any blue box. The points that are found in this local check are circled in green. }
    \label{fig:maxima-set}
\end{figure}

Using $T_j$ and $R_j$, we can define a Boolean function $f$  that takes as input a point $p$ and returns 1 if $p$ is not dominated by either of $T_j$ and $R_j$. We then apply Theorem \ref{thm:min-finding} to search for the maximum  element in lexicographic order subject to this Boolean function. It turns out that this point is part of the final output set, as $R_j$ tells us it is not dominated by any point in blocks to the right of $j$ and $T_j$ tells us it is not dominated by any point in block $j$. We then update $R_j$ to equal this maximal point we found, and repeat this process in block $j$ until the Boolean function does not mark any elements in the set. We illustrate the first iteration of this process in Figure~\ref{fig:maxima-set-block}. Since the output size is $h$, it suffices to run this process $O(h)$ times in total across all blocks. Finally, we collect the outputs in the blocks that contain a point in $M_T$, which takes at most $O(h)$ time.  
See Algorithm~\ref{alg:complete-maxima-set}. 

% We can then eliminate any point in the $j$-th block whose $y$-coordinate is less than $R_j$. The $R_j$'s are found found when we search for the maxima set of the tallest points in the combine step. We then search for the maximum point in the surviving set of points, and repeat the search using the $y$-coordinate of the last found point as the new $R_j$. \sandy{I have read this several times and I still can't tell how you select $R_j$ after the each iteration. I only understand how they are selected initially. I think it would be helpful to have a figure like Figure 2, but one that shows just one block in more detail, showing R and T, the point that is selected in the iteration and how it becomes the new R.} This process is repeated up to $h$ times in total across the blocks, and we run a final pass through the outputs of each block to report the points that form the maxima set. Note that we can bound the number of iterations by $h$ since the algorithm finds an output point in each iteration. \sandy{Is that last sentence true? If so, perhaps it can be justified a little more explicitly? The max point found in each iteration is always in the output set?} 

\begin{figure}[hbt!]
     \centering
     \begin{subfigure}[b]{0.45\textwidth}
         \centering
         \includegraphics[width=\textwidth]{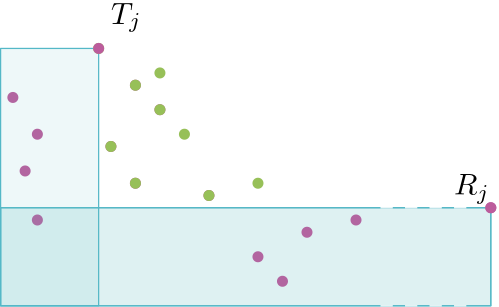}
         \label{fig:maxima-set-block-1}
     \end{subfigure}
     \hfill
     \begin{subfigure}[b]{0.45\textwidth}
         \centering
         \includegraphics[width=\textwidth]{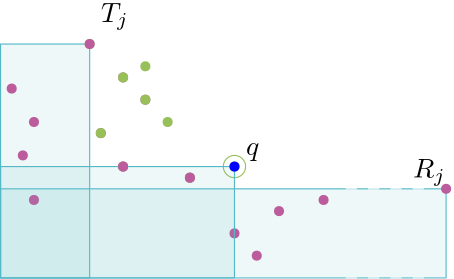}
         \label{fig:maxima-set-block-2}
     \end{subfigure}
        \caption{A demonstration of Algorithm \ref{alg:complete-maxima-set} where we iteratively search for the maxima point within block $j$ given $T_j$ and $R_j$. The left shows the state of the first iteration, and all points shown are the contents of block $j$ except for the one labeled $R_j$.  Points are green if they are not dominated by $T_j$ or $R_j$ and thus return 1 to the Boolean function $f$. Among these points, we search for the lexicographic maximum point $q$ and set this to be the new right boundary for the next iteration. The first $q$ to be discovered in the above instance is marked in blue. This is repeated until there are no remaining green points. }
        \label{fig:maxima-set-block}
\end{figure}

\begin{algorithm}[htb!]
\caption{\texttt{CompleteMaximaSet}($j$, $h$, $T_j$, $M_j$)}\label{alg:complete-maxima-set}
\begin{algorithmic}[1]
\Require $j, h$: The block we are conquering and the total number of blocks. 
\Require $T_j$: The $y$ coordinate of the tallest point in $S_j$. 
\Require $R_j$: The $y$ coordinate of the first point in $M$ appearing after all points in $S_j$. 
% \State $\ket{S_j} = $\texttt{qPrep($j$, $h$)} 
% (Algorithm \ref{alg:prep-j})
\State Define a Boolean function $f$ that takes as input a point $p$ and returns 1 if $p$ is not dominated by $T_j$ or $R_j$. 
% \While{\texttt{qMax($f$,$\preceq_{Lex}$, $j$, $h$}) returns a point $q$}
\State {\bf while} \texttt{qMax($f$,$\preceq_{Lex}$, $j$, $h$}) returns a point $q$ {\bf do}
\State \xxx Store $q$ in an output list.  
\State \xxx Define a new $f$ that takes as input a point $p$ and returns 1 if $p$ is not dominated by\break 
\xxx $T_j$ or $q$. 
% \EndWhile 
% \State Run at most $h$ iterations of Grover search to find the lexicographically largest point that is not dominated by $T_j$ or $R_j$. After each iteration, update $R_j$ with the $y$ coordinate of the newly found point.\sandy{It seems you should update $T_j$ with the new point found. Also, it would be nice to have an explicit output step where you are identifying the points that end up in the output. Also, should there be an end condition: while there are un-eliminated points in the the block? Another thing I thought of is that I think you are implicitly using using Theorem 2, here but I think it would be helpful to be more precise about that the function f is actually doing. I think the marked elements from Theorem 2 are the un-eliminated points, but I think you should spell that out more clearly.} 
\end{algorithmic}
\end{algorithm}

\begin{algorithm}[ht]
\caption{\texttt{QuantumMaximaSet}($S$)}\label{alg:maxima-set-info}
\begin{algorithmic}[1]
\Require $U_{in}$: a the unitary that encodes the lexicographically sorted data $[d_0, \ldots, d_{n-1}]$
\Require $h$: the number of blocks to partition into.
\State $h = 4$.
% \While{True}
\State {\bf while} True {\bf do}
\State \xxx {\bf for} $j \in [0, 1, \ldots, h-1]$
% \State \xxx \xxx Prepare $\ket{S_j} = $\texttt{qPrep}($j$, $h$)
\State \xxx \xxx $T_j = \texttt{qMax}(f, \preceq_y, j, h)$ where $f(x) = 1$ for all $x \in S$ 
\State \xxx Let $T = [T_0, T_1,\ldots, T_{h-1}]$.
\State \xxx $M = \texttt{ClassicalMaximaSet($T$)}$ 
\State \xxx {\bf for} $j \in [0, 1, \ldots, h-1]$
\State \xxx \xxx \texttt{CompleteMaximaSet($j$, $h$, $T_j$, $M_j$)}
% \If{The total number of points found exceeds $h$}
\State \xxx {\bf if} {The total number of points found exceeds $h$} {\bf then}
\State \xxx\xxx $h = 2h$
% \Else
\State \xxx {\bf else}
\State \xxx\xxx Print all the points in the output register of all $j$ blocks and terminate the loop.
% \EndIf
% \EndWhile
\end{algorithmic}
\end{algorithm}

Finally, the case where $h$ is unknown can be handled using an approach where we start with a small estimate of $h$, then repeat the procedure using $h = 2h$ if we discover more than $h$ points. 
Thus, the running time forms a geometric sum that is 
$\tilde{O}(\sqrt{nh})$.
See Algorithm \ref{alg:maxima-set-info}. 

\begin{theorem}
There is a quantum algorithm for finding
    the maxima set of a presorted set of $n$ points in the plane 
in $\tilde{O}(\sqrt{nh})$ time, where $h$ is the size of the output.
\end{theorem}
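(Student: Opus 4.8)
The plan is to establish two claims: that \texttt{QuantumMaximaSet} reports exactly the maxima set of $S$, and that it runs in $\tilde{O}(\sqrt{nh})$ time with high probability. I would organize the proof around the three ingredients of the paradigm---the combine step that produces the per-block peaks $T_j$ and their maxima set $M_T$, the conquer step realized by the repeated \texttt{qMax} calls inside \texttt{CompleteMaximaSet}, and the outer loop that doubles the estimate of $h$---and then assemble a running time that is geometric in this estimate.

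For correctness, the structural fact I would lean on is that, because $S$ is sorted lexicographically and domination requires a \emph{strictly} larger $x$-coordinate, no point in a block left of $S_j$ can dominate a point of $S_j$; hence $q\in S_j$ lies in the maxima set iff it is dominated neither within $S_j$ nor by a point of a block to the right. The combine step encodes exactly the two obstructions that remain: the peak $T_j$ of $S_j$, and $R_j$, whose $y$-coordinate is the largest occurring to the right of $S_j$ (equivalently the leftmost point of $M_T$ to the right of $j$). I would then verify that the lexicographic maximum $q$ among the points not dominated by $T_j$ or $R_j$ is genuinely undominated: any in-block $p$ dominating $q$ is lexicographically larger than $q$, so by maximality of $q$ it must itself be dominated by $T_j$ or $R_j$, and either case transfers the domination onto $q$ (since $p.x>q.x$ and $p.y>q.y$), contradicting $q$'s selection. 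Updating $R_j\leftarrow q$ and iterating then peels off the maxima of $S_j$ in order of decreasing $x$ and increasing $y$, terminating precisely when every surviving point is dominated, so each block outputs exactly its maxima and nothing else.

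For the running time, I would first bound a single round using $h$ blocks. The combine step issues one \texttt{qMax} per block to find $T_j$; by Theorem~\ref{thm:min-finding} each search over the $n/h$ points prepared by \texttt{qPrep} costs $\tilde{O}(\sqrt{n/h})$, so the $h$ searches total $\tilde{O}(\sqrt{nh})$, and computing $M_T$ classically from the $h$ values adds only $\tilde{O}(h)$. For the conquer step the output-sensitive observation is that every \texttt{qMax} call returning a point yields a distinct maximum, so aborting the round as soon as more than $h$ points have been reported caps the point-returning calls at $O(h)$, while the one terminating null call per touched block contributes at most $O(h)$ more. Thus the conquer step performs $O(h)$ searches, each of cost $\tilde{O}(\sqrt{n/h})$, and a full round costs $\tilde{O}(\sqrt{nh})$.

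To handle unknown $h$, I would analyze the doubling loop: starting from $h=4$ and doubling whenever the reported count exceeds the current estimate, the round that finally succeeds uses an estimate below $2h^{*}$, where $h^{*}$ is the true output size, so the per-round costs $\tilde{O}(\sqrt{nh})$ form a geometric series dominated by its last term, namely $\tilde{O}(\sqrt{nh^{*}})$. I expect the two points needing the most care to be, first, the necessity of the early abort---without it an early round with a tiny estimate could itself uncover all $h^{*}$ maxima and exceed the budget---and second, the probabilistic bookkeeping, since \texttt{qMax} has bounded error and the within-block calls are \emph{adaptive} (each uses the previous $q$ as its new boundary). For the latter I would amplify each individual search to failure probability $1/\mathrm{poly}(n)$ at an $O(\log n)$ overhead absorbed into the $\tilde{O}$, and then union-bound over the $O(h)\le O(n)$ searches to conclude correctness with high probability.
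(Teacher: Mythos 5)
Your proposal is correct and follows essentially the same route as the paper's proof: the same $\tilde{O}(h\sqrt{n/h})=\tilde{O}(\sqrt{nh})$ accounting for the combine-step searches, the same $O(h)$ cap on point-returning \texttt{qMax} calls in the conquer step (each costing $\tilde{O}(\sqrt{n/h})$), and the same geometric-sum analysis of the doubling loop. You additionally make explicit three details the paper leaves implicit---the transfer argument showing each lexicographic maximum returned is a true maximum, the early abort of a round once the count exceeds the current estimate (genuinely needed for the geometric sum, since Algorithm~\ref{alg:maxima-set-info} as written checks the count only after all blocks finish), and the amplification-plus-union-bound bookkeeping for the adaptive bounded-error searches---all of which are correct refinements rather than a different approach.
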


\begin{proof} 
% \sandy{I added a bit more detail to this proof. Please read.}
The Grover searches for the tallest points in each block takes $\tilde{O}(h\sqrt{n/h} ) = \tilde{O}(\sqrt{nh})$ time. 
    Once this set $T$ is found, the maxima set $M$  of $T$ can be computed classically in $O(h)$ time. The total time incurred by  the calls to
    \texttt{CompleteMaximaSet} 
    is dominated by the calls to \texttt{qMax}, each of which outputs a point or terminates the call to  \texttt{CompleteMaximaSet}.
    Therefore \texttt{qMax} is called $O(h)$ times and each call requires $\tilde{O}(\sqrt{n/h})$ time for a total of $\tilde{O}(\sqrt{nh})$ time.
 Note that the condition that the points we are searching over are between $T_j$ and $R_j$ can be done in constant time. 
    %Thus, this step can also be completed in $\tilde{O}(\sqrt{nh})$ time. 
The outer loop repeats at most $O(\log n)$ times, whose contribution is suppressed by $\tilde{O}$. 
    Finally, the time to output the points that were found takes $O(h)$ time and is dominated by the other parts of the algorithm.
\end{proof}

\section{Quantum Convex Hulls}

In this section, we describe a quantum algorithm to construct the convex hull 
of $n$ points in the plane
in $\tilde{O}(\sqrt{nh})$ time, where $h$ is the size of the output. 
Again, we assume that the input set is sorted in lexicographic order. 
We follow our combine-and-conquer paradigm, where we use some global 
properties that can be computed from our subsets to prune off points 
that do not appear in the output, 
then use these values to isolate the problem solving within each block. 
Here, however, the combine step is more complicated that simply finding
a point, $p_{\rm max}$, with maximum $y$-coordinate.

% \subsection{A Classical Divide-and-Conquer Convex Hull Algorithm}
We begin by reviewing a well-known classical divide-and-conquer 
algorithm for 2D
convex hulls (see, e.g., \cite{deberg,seidel,preparata,orourke}).
% and discussing some key properties about convex hulls that we  use in our 
% quantum algorithm.
The convex hull $CH(S)$ of a set, $S$, of $n$ points in the plane
can be described as a union of the polygonal chains
that form the upper hull $UH(S)$ and lower hull $LH(S)$ of the points, where
$UH(S)$ (resp., $LH(S)$) is the set of edges of $CH(S)$ with positive 
(negative) normals. 
% It is standard to use this fact to first compute the upper hull, 
% then run the same algorithm after flipping the plane for 
% the lower hull~\cite{deberg,seidel,preparata,orourke}.
For completeness, we provide a description of a classical divide-and-conquer
algorithm for computing $UH(S)$ in appendix D,
which divides $S$ into a left set, $S_1$, and right set, $S_2$,
and recursively finds $UH(S_1)$ and $UH(S_2)$. Then it finds the 
tanget segment bridge between 
$UH(S_1)$ and $UH(S_2)$ and prunes away the points under the bridge,
resulting in an algorithm running in $O(n\log n)$ time.
By a well-known point-line duality,
which we also review in an appendix,
the bridge edge 
can be found by a solving a two-dimensional linear program;
see, e.g., \cite{deberg,seidel,orourke}.

\begin{lemma}[see, e.g.,~\cite{deberg,seidel,orourke}]
\label{lemma:bridge-edge}
    Given a vertical line $L$, and a set $S$ of $n$ points in a plane, one can determine the edge $e$ of $UH(S)$ that intersects $L$ (or that no such edge exists) by finding the highest point of intersection between $O(n)$ halfplanes. 
% \sandy{What is $m$?}
\end{lemma}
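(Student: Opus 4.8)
The plan is to use the standard point-line duality to reduce the problem to a two-dimensional linear program with $n$ halfplane constraints, and then recast that program in the ``highest point of an intersection of halfplanes'' form asserted by the lemma. First I would set up coordinates: write $L$ as the vertical line $x = x_0$ and each input point as $p_i = (a_i, b_i)$. The key analytic fact I would establish is that the upper hull, viewed as a concave function $x \mapsto UH(x)$, equals the pointwise minimum over all lines $y = mx + c$ that lie weakly above every point of $S$ (the lines \emph{supporting $S$ from above}). Consequently, for $x_0$ strictly interior to the $x$-range of $S$, the minimizing such line is exactly the line carrying the edge $e$ of $UH(S)$ that crosses $L$, and the two points of $S$ it passes through are the endpoints of $e$.

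Next I would translate this characterization into a linear program in the variables $(m, c)$: minimize $m x_0 + c$ subject to $a_i m + c \ge b_i$ for $i = 0, \dots, n-1$. Each constraint $a_i m + c \ge b_i$ is a halfplane in the $(m,c)$-plane, so there are exactly $n = O(n)$ of them, and the feasible region is their common intersection, an upward-unbounded convex polygon. The optimum is attained at a vertex of this region, namely the intersection of the two tight constraints, which correspond to the two endpoints of $e$; from these two indices the edge is recovered immediately.

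To match the exact statement of the lemma, I would apply the affine change of coordinates $(m,c) \mapsto (X, Y)$ with $X = m$ and $Y = -(m x_0 + c)$. Under this map, minimizing $m x_0 + c$ becomes maximizing $Y$, i.e. finding the \emph{highest} point, and each constraint halfplane maps to a halfplane in the new coordinates. Thus determining $e$ is equivalent to finding the highest point of intersection of $O(n)$ halfplanes, as claimed.

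The main obstacle I expect is not any heavy computation but the careful verification of the duality correspondence together with the degenerate and boundary cases. I must confirm rigorously that the two constraints tight at the LP optimum are precisely the endpoints of the crossing edge, and I must account for the ``no such edge exists'' case: this manifests as the objective being unbounded, or the optimal vertex receding to infinity, when $x_0$ lies outside $[\min_i a_i, \max_i a_i]$. I would also fix a consistent tie-breaking rule (in keeping with the lexicographic ordering of $S$) for the situation where $L$ passes exactly through a hull vertex, so that a single well-defined edge is returned.
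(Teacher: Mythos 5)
Your proposal is correct and follows essentially the same route as the paper: the paper justifies this lemma via the standard point--line duality $(a,b)\mapsto y=ax-b$ after horizontally shifting the plane so that $L$ becomes the $y$-axis, which turns bridge-finding into locating the highest point of the intersection of $n$ lower halfplanes --- exactly the reduction your LP in $(m,c)$ plus the affine map $(m,c)\mapsto(m,-(mx_0+c))$ produces. Your added care with unboundedness (the ``no such edge'' case) and ties when $L$ hits a hull vertex is a sound elaboration of details the paper leaves to the cited references.
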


If we use linear-time 2D linear programming for 
the bridge-finding step, then the
running time for this classical divide-and-conquer algorithm is $O(n\log n)$,
even if the points are sorted by $x$-coordinate.
Kirkpatrick and Seidel~\cite{kirkpatrick} show that this classical running
time can be improved to $O(n\log h)$ by performing the bridge-finding
step before the (recursive) conquer steps.
Likewise, our quantum convex hull algorithm also performs this
combine step before the conquer steps, but avoids recursion. 

Before we give our algorithm, let us review
another well-known classical algorithm for computing a two-dimensional
convex hull---the \emph{Jarvis march} algorithm;
see, e.g., \cite{deberg,seidel,preparata}.
In this algorithm, we start with a point, $p$, known to be on the convex 
hull, such as a point with smallest $y$-coordinate. Then, we find
the next point on the convex hull in a clockwise order, by a simple
linear-time maximum-finding step. We iterate this search, winding around
the convex hull, until we return to the starting point.
Since each iteration takes $O(n)$ time, this algorithm 
runs in $O(nh)$ time.
Lanzagorta and Uhlmann \cite{lanzagorta2004quantum}
describe a quantum assisted version of the Jarvis march algorithm
that runs in $\tilde{O}(h\sqrt{n})$ time,
which we review in an appendix.

% \subsection{Our Quantum Combine-and-Conquer Convex Hull Algorithm}

Let $S$ be a set of $n$ points in the plane sorted lexicographically.
For simplicity of expression, let us assume that the points have
distinct $x$-coordinates and we are interested in computing the
upper hull, $UH(S)$, of $S$. 
As in our maxima set algorithm, let us assume for now that we know
the value of $h$, the number of points on the upper hull of $S$ to simplify the presentation.
We divide $S$ into $h$ blocks, $S_0,S_1,\ldots,S_{h-1}$,
of size $O(n/h)$ each by vertical lines, $L_1,L_2,\ldots,L_{h-1}$,
arranged left-to-right.
This is achieved by using Algorithm \ref{alg:prep-j} to partition the input.
% Since $S$ is already sorted, this is easily done in $O(h)$ time.

At a high level, our quantum combine-and-conquer upper hull algorithm
first computes all the bridge edges that are intersected by one of the
vertical lines, $L_i$. 
This amounts to the combine step of our quantum 
combine-and-conquer algorithm.
Each of these bridge edges belong to the upper hull of $S$; hence,
there are at most $h$ such bridges. Importantly, we show how to compute
all these bridges in $\tilde{O}(\sqrt{nh})$ time.
Each block of points, $S_j$, has $O(n/h)$ points,
and, of course, there may be remaining upper hull points to compute
for each such $S_j$ (between points of tangency for the bridges computed
in the combine step).
% \sandy{Added the next two sentences. Please check.}
Note that the conquer step for block $S_j$ will return
in $O(1)$ time if $S_j$ does not contain any endpoints of a bridge edge or if it contains a single point where two bridge edges meet. 
See Figure~\ref{fig:convex-hull-alg} showing the edges that come from the combine step and the edges that come from the conquer step.

To compute the remaining convex hull points, then, we perform a modified
quantum Jarvis march for each subset, $S_j$, which runs
in time $\tilde{O}(h_j\sqrt{n/h})$ time, where $h_j$ is the number
of additional upper hull points found for $S_j$.
Thus, the total time for this second (conquer) phase of our algorithm
is
\begin{eqnarray*}
\sum_{j=1}^h \tilde{O}(h_j\sqrt{n/h}) \ \ &\le& \tilde{O}(h\sqrt{n/h}) \\
&=& \tilde{O}(\sqrt{nh}).
\end{eqnarray*}

Given this overview, let us next describe how to perform the different
steps of our algorithm,
%
% \subsubsection{The Combine Step: Bridge finding}
% In this subsection, we 
beginning with our quantum algorithm for finding
all the bridge edges intersecting the vertical lines, 
$L_1,L_2,\ldots,L_{h-1}$, which separate the subsets, $S_0,S_1,\ldots,S_{h-1}$,
of $S$.
At a high level,
our combine-step method can be viewed as a quantum adaptation of a classical
algorithm by Goodrich~\cite{GOODRICH1993267} for computing the upper
hull of a partially sorted set of points.

As mentioned above,
by point-line duality, bridge finding 
is dual to two-dimensional linear programming.
In more detail,
let $S$ be a set of $n$ points in the plane, and $L$ be a vertical line in this plane. Let $S'$ be the same set of points after shifting the plane horizontally so that $L$ is the $y$-axis. 
We define a dual plane by taking each point $p = (a, b)$ and mapping it to a line $y = ax - b$. 
A standard result in computational geometry is that the points in the upper hull of $S'$ correspond to the lower envelope of 
the set of lines in the dual plane; 
see, e.g.,~\cite{deberg,seidel,preparata}.
Furthermore, by the duality transformation, the highest 
point in the lower envelope is the intersection of two lines in the envelope whose slopes transition from positive to negative. Mapping this point to the primal plane gives us the bridge edge we are looking for. Thus, we can reduce the problem of finding a bridge edge to the problem of finding the largest point in the intersection of $n$ lower-halfplanes. 
We also use the following lemma due to Sadakane et al. showing that the problem of computing the highest point in the intersection of $n$ halfplanes efficiently using a quantum algorithm. 

\begin{lemma}[Sadakane et al. \cite{sadakane2002quantum}]\label{lemma:linear-program}
    The highest point in the intersection of $n$ lower-halfplanes can be computed in $O(\sqrt{n}\log^2 n)$ time, using a quantum computer, 
with probability $1 - n^{-c}$ for any constant $c$ given the linear constraints in superposition. 
\end{lemma}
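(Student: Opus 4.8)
The plan is to cast the problem as a two-dimensional linear program and to solve it by parametric search, using the quantum maximum/minimum-finding routine of Theorem~\ref{thm:min-finding} to implement a fast feasibility oracle. Writing each lower halfplane as $y \le a_i x + b_i$, the upper boundary of the feasible region is the lower envelope $g(x) = \min_i (a_i x + b_i)$, which is a concave, piecewise-linear function, so the highest feasible point we seek is the vertex at which $g$ attains its maximum value $v^\ast = \max_x g(x)$.

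First I would build a decision oracle that, given a candidate value $v$, reports whether $v^\ast \ge v$. This is a one-dimensional feasibility test: we need an $x$ with $a_i x + b_i \ge v$ for all $i$. Splitting the constraints by the sign of $a_i$, the positive-slope lines force $x \ge x_L := \max_{a_i>0}(v-b_i)/a_i$, the negative-slope lines force $x \le x_U := \min_{a_i<0}(v-b_i)/a_i$, and the horizontal lines ($a_i=0$) impose the value-only test $\min_{a_i=0} b_i \ge v$. Each of $x_L$, $x_U$, and the horizontal test is a single invocation of \texttt{qMax}/\texttt{qMin} over a subset of the constraints given in superposition, so by Theorem~\ref{thm:min-finding} the oracle runs in $\tilde{O}(\sqrt n)$ time, and it declares $v$ feasible exactly when $x_L \le x_U$ and the horizontal test passes.

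The key structural fact is that feasibility is monotone in $v$: such an $x$ exists if and only if $v \le v^\ast$, since $\max_x g(x) = v^\ast$. I would therefore binary-search on $v$ to locate $v^\ast$, then recover the optimal $x$ and the two binding lines at the optimum. Because the coordinates are $w$-bit numbers, $v^\ast$ is the ordinate of an intersection of two input lines and is thus a bounded-precision rational with $O(w)$-bit numerator and denominator; hence $O(w) = O(\log n)$ binary-search steps suffice to isolate it, after which we snap to the exact vertex. Multiplying $O(\log n)$ oracle calls by the $\tilde{O}(\sqrt n)$ cost per call, and amplifying each \texttt{qMax}/\texttt{qMin} call so that a union bound over all calls yields overall success probability $1 - n^{-c}$, gives the claimed $O(\sqrt n \log^2 n)$ running time.

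The main obstacle is the parametric-search layer rather than the quantum speedup itself. I would need to argue termination and exactness carefully: the search is over a continuous parameter, so correctness relies on the $w$-bit precision bound to guarantee a finite, $O(\log n)$-length search and an unambiguous rounding step at the end. I would also have to dispatch the degenerate and unbounded cases---when every $a_i$ has the same sign the envelope has no finite maximum (precisely the ``no bridge edge exists'' case of Lemma~\ref{lemma:bridge-edge}), and ties among binding lines must be broken consistently so that the min-finding step identifies the correct vertex. By contrast, a direct quantum port of Megiddo-style prune-and-search would stall, since its pivot step requires an exact median of $\Theta(n)$ intersection abscissae and median-finding admits no quantum speedup; routing through a monotone feasibility oracle sidesteps this entirely.
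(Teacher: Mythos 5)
Your proposal is correct in substance, but be aware that the paper itself contains no proof of this lemma: it is imported wholesale from Sadakane et al.~\cite{sadakane2002quantum}, so the only ``in-paper'' treatment is the citation. Your route also differs from the cited source, which obtains the bound by embedding Grover search inside a randomized, sampling-based LP-type algorithm (in the style of Clarkson: solve the program on a random subset of constraints, quantum-search for violators, and iterate over $O(\log n)$ rounds of $\tilde{O}(\sqrt{n})$ each), rather than by parametric search. Your alternative---maximize the concave lower envelope via a monotone feasibility oracle built from three \texttt{qMax}/\texttt{qMin} calls (positive slopes give $x_L$, negative slopes give $x_U$, horizontal constraints give a pure value test), then binary-search on $v$---is sound: monotonicity is immediate from concavity, each oracle call costs $\tilde{O}(\sqrt{n})$ by Theorem~\ref{thm:min-finding}, and amplifying each call so a union bound covers all $O(\log n)$ of them yields success probability $1-n^{-c}$ within $O(\sqrt{n}\log^2 n)$, as claimed. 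The substantive trade-off is the precision assumption your approach needs and the sampling approach does not: both your $O(\log n)$ iteration count and the final ``snap'' to the exact vertex rest on coordinates being $O(\log n)$-bit, and the snap needs a little more care than you give it---at a feasible $v$ just below $v^\ast$, a non-binding line of large slope can still achieve the argmax defining $x_L$, so you should search a constant factor more bits than the naive separation bound (and ideally add one $\tilde{O}(\sqrt{n})$ Grover verification that the recovered vertex violates no constraint). In the real-RAM setting of Sadakane et al.\ your argument does not terminate, so what you prove is a word-RAM version of the lemma---fully adequate for this paper's use, but formally weaker than the cited statement. Your handling of unboundedness (all slopes of one sign) correctly matches the ``no such edge exists'' outcome of Lemma~\ref{lemma:bridge-edge}, and your observation that exact median finding admits no quantum speedup, which rules out a direct Megiddo port, is accurate.
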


The linear constraints in superposition refers to a state in the following form:
\begin{equation}
    \ket{\psi} := \frac{1}{\sqrt{n}} \sum_{i = 0}^{n - 1} \ket{i}\ket{p_i}
\end{equation}
where $p_i$ are the coordinates of each of the $n$ points. By point-line duality, the coordinates fully describe the halfplanes in the dual plane. 

The output element is a point in the dual plane, which maps back to a line in the primal plane. To determine which points this line intersects, we can run two iterations of minimum finding to find the points in $S$ that are on the resulting line by minimizing by the distance to the line. 
%
% The running time of this operation is still dominated by the time it takes to find the line. 
% We incorporate the above two lemmas in Algorithm~\ref{alg:bridge},
% which is called as a subroutine in our full algorithm. 
We combine the above two lemmas to a function with the signature,
\begin{equation}
    \texttt{qLP($L$, $i$, $j$, $h$)},
\end{equation}
and it returns the endpoints $p_s, p_f \in S$ of the bridge edge as a tuple. This process will also call $\texttt{qPrep}($i$, $h$)$ as needed. 

\begin{algorithm}
\caption{Bridge($i$, $j$, $h$)}\label{alg:bridge}
\begin{algorithmic}[1]
\Require $i, j$: The indices of blocks to find the bridge edge over. 
\Require $h$: The total number of blocks that we are partitioning into. 
% \State $\ket{S_i} = $ \texttt{qPrep}($i$, $h$)
% \State $\ket{S_j} = $ \texttt{qPrep}($j$, $h$)
\State Let $L$ be a vertical line drawn between the last point of $S_i$ and the first point of $S_j$. This can be computed by querying $U_{in}$ at the final and first index of blocks $i$ and $j$ respectively.
\State $(p_s, p_f) = $\texttt{qLP}($L$, $i$, $j$, $h$)
\State Return the two endpoints $p_s \in S_i$ and $p_f \in S_j$ of the bridge edge. 
\end{algorithmic}
\end{algorithm}

We are now ready to describe the full algorithm. The first step will be to divide our input set into $h$ blocks, each containing $n/h$ points. We define a \textbf{bridge edge} to be an edge in the upper hull of $S$ whose endpoints are in two different blocks. 

To find the bridge edges, we take inspiration from the classical Graham scan algorithm for computing the convex hull. We will be using the fact that if we traverse the convex hull in the clockwise direction, each edge makes a right turn from its previous edge. Since bridge edges are edges in the upper hull, this is true for them as well. We will compute bridge edges between consecutive pairs, but if at any point a left turn is formed between two bridge edges, we know that these edges will not be a part of the upper hull. Critically, this allows us to ignore an entire block of points, and move on to find the bridge edge between the remaining blocks. 
The details of the above approach 
are outlined in Algorithm~\ref{alg:bridge-edges}.

\begin{algorithm}
\caption{FindBridgeEdges($h$)}\label{alg:bridge-edges}
\begin{algorithmic}[1]
\State Initialize an empty stack $\Sigma$.
\State Let $L_1 := $Bridge($0, 1, h$), and push this onto $\Sigma$. 
\Comment{$L_i$ will represent the bridge edge that ends in block $i$.}
% \For{$i \in [2, h-1]$}
\State {\bf for} $i \in [2, h-1]$ {\bf do}
\State \xxx Let $L_i :=$ Bridge($i-1, i, h$).
% \While{$L_{i - 1}$ and $L_i$ form a left turn}
\State \xxx {\bf while} {$L_{i - 1}$ and $L_i$ form a left turn} {\bf do}
\State \xxx\xxx Pop the stack
\State \xxx\xxx Let $t$ be the index of the item on top of the stack. 
\State \xxx\xxx Let $L_i :=$ Bridge($t, i, h$). 
% \EndWhile 
\State \xxx Push $L_i$ on the stack
% \EndFor 
\State Return $\Sigma$ as an array. 
\end{algorithmic}
\end{algorithm}

\begin{lemma}\label{lemma:bridge-edges}
    Let $S$ be a set of points divided into $h$ blocks such that $p_i \leq p_j$ for all $p_i \in S_i$ and $p_j \in S_j$ where $i \leq j$. Then, the set of bridge edges of this partition can be found in $\Tilde{O}(\sqrt{nh})$ time.
\end{lemma}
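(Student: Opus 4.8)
The plan is to verify both the correctness and the running time of Algorithm~\ref{alg:bridge-edges}, treating it as a Graham-scan computation in which individual points are replaced by entire blocks and each ``turn'' test is replaced by a bridge computation via \texttt{Bridge} (Algorithm~\ref{alg:bridge}).

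For the running time, I would first observe that each invocation of \texttt{Bridge}$(i,j,h)$ prepares only the states for blocks $S_i$ and $S_j$ via \texttt{qPrep} and then runs \texttt{qLP} on the $O(n/h)$ points they contain; by Lemmas~\ref{lemma:bridge-edge} and~\ref{lemma:linear-program} this costs $\tilde{O}(\sqrt{n/h})$ time. The remaining task is to bound the number of \texttt{Bridge} calls. Here I would use the standard amortization behind Graham scan: over the whole execution each block is pushed onto $\Sigma$ exactly once and popped at most once, so the total number of \textbf{while}-loop iterations is $O(h)$; adding the one \texttt{Bridge} call made per iteration of the outer \textbf{for} loop gives $O(h)$ calls in total. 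Multiplying the per-call cost by the number of calls yields $O(h)\cdot\tilde{O}(\sqrt{n/h}) = \tilde{O}(\sqrt{nh})$. Finally, since each \texttt{qLP} succeeds with probability $1-n^{-c}$ and there are only $O(h)=O(n)$ calls, a union bound with $c$ chosen large shows that every bridge computation is correct with high probability, which accounts for the ``w.h.p.'' aspect of the result.

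For correctness I would argue by induction on the index $i$ of the block being processed, maintaining the invariant that after the iteration for block $i$ the stack $\Sigma$ holds a subsequence of block indices $b_0 < b_1 < \cdots < b_m = i$ whose consecutive bridges are exactly the bridge edges of $UH(S_0\cup\cdots\cup S_i)$ and form an upper-convex chain. The inductive step has two pieces: first, that \texttt{Bridge}$(t,i,h)$, computed using only the points of $S_t$ and $S_i$, returns the true common upper tangent of these two sets; and second, that the left-turn test correctly detects when the current top block lies below the new tangent and must be discarded. The key sub-lemma is that, given the upper-convexity of the bridges already on the stack, the left endpoint of the true global bridge into block $i$ lies in whatever block $t$ survives the popping, so restricting the linear program to $S_t\cup S_i$ loses no relevant tangent candidate.

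The step I expect to be the main obstacle is exactly this correctness sub-lemma, namely justifying that a bridge computed from only two blocks' worth of points coincides with the bridge of the entire prefix $S_0\cup\cdots\cup S_i$. I would handle it by showing that the left-turn/pop rule enforces that the sequence of surviving bridges is upper-convex, and that upper-convexity of this chain implies every non-surviving block---together with all of its points, which lie below that block's own tangents---sits strictly below the chain; consequently no point outside $S_t\cup S_i$ can raise the common tangent, so the local linear program and the global bridge agree. This is the blocked (quantum) analogue of the classical correctness proof of Graham scan and of Goodrich's partially-sorted upper-hull algorithm~\cite{GOODRICH1993267}, and I would lean on that correspondence to keep the geometric case analysis short.
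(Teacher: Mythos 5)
Your proposal is correct and takes essentially the same approach as the paper: the paper's entire proof is your amortization argument (each block's pending contribution is popped at most once, so there are $O(h)$ calls to \texttt{Bridge}, each costing $\tilde{O}(\sqrt{n/h})$ by Lemmas~\ref{lemma:bridge-edge} and~\ref{lemma:linear-program}, for $\tilde{O}(\sqrt{nh})$ total). The correctness invariant, the two-block-versus-global-prefix bridge sub-lemma, and the union bound over \texttt{qLP} failures that you sketch all go beyond the paper's proof, which argues only the running time and leaves the Graham-scan-style correctness claim to the informal discussion preceding Algorithm~\ref{alg:bridge-edges}.
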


\begin{proof}
    Each block's pending contribution to the bridge edges can only be popped at most once. Thus, the total time it takes for the loop starting at line 3 of Algorithm \ref{alg:bridge-edges} will be proportional to $O(h)$. For each edge, we must compute the Bridge between sets $S_j$ and $S_i$, which will take $\Tilde{O}(\sqrt{n/h})$ time by lemma \ref{lemma:bridge-edge} and \ref{lemma:linear-program}. Thus, the total time it takes to find the bridge edges is $\Tilde{O}(\sqrt{nh})$. 
\end{proof}

% \subsubsection{The Conquer Step: Computing Upper Hull Points in Each $S_j$}
Once the bridge edges are identified, what remains is to find the edges of the convex hull whose end points are in the same block. 
To do this, we use a quantum assisted Jarvis march. 
Recall that the Jarvis march is another standard convex hull algorithm 
in which we start with an edge $(p, q)$ on the convex hull, 
then do a search over the remaining points to find the point $r$ 
such that the angle formed between $(p, q)$ and $(q, r)$ is maximized. 

In each block, $S_i$, rather than performing a full Jarvis march, however, 
we start the search from the bridge edge entering $S_i$ from the left, and perform up to $h$ Grover searches for the point maximizing the angle until we find the starting point of the bridge edge leaving $S_i$. 
We summarize this step in Algorithm~\ref{alg:q-jarvis-march} and Lemma~\ref{lemma:q-jarvis-march}.
See also Figure~\ref{fig:jarvis-march}.

\begin{algorithm}
\caption{QuantumBlockJarvisMarch($j$, $h$, $B$)}\label{alg:q-jarvis-march}
\begin{algorithmic}[1]
\Require $B$ is the set of bridges of $S$ computed in Algorithm \ref{alg:bridge-edges}. 
\State Define $f_c$ to be a function such that $f_c(p) = 1$ iff there is a bridge in $B$ that starts at $p$.
\State Define $f_f$ to be a function such that $f_f(p) = 1$ iff there is a bridge in $B$ that ends at $p$.
\State $p_c = \texttt{qMax}(f_c,\preceq_{Lex}, j, h)$
\State $p_f = \texttt{qMax}(f_f,\preceq_{Lex}, j, h)$
\State If neither are present, Return. 
\State Let $p_s, p_c$ be the two end points of the bridge edge containing $p_c$.
\State {\bf while} $p_c \neq p_f$ {\bf do}
\State \xxx Define $\preceq_{deg(p_s,p_c)}$ to be the ordering induced by the angle formed between $(p_s, p_c)$ \break \xxx and $(p_c, p)$ for $p \in S$. 
\State \xxx $p_r = \texttt{qMax}(f, \preceq_{deg(p_s,p_c)}, j, h)$ where $f(x) = 1$ for all $x$.
\State \xxx Add $p_r$ to the set of output points associated with block $j$.
\State \xxx Let $(p_s, p_c) = (p_c, p_r)$. 
% \EndWhile 
\end{algorithmic}
\end{algorithm}

\begin{figure}[htb]
    \centering
    \includegraphics[width=.95\linewidth]{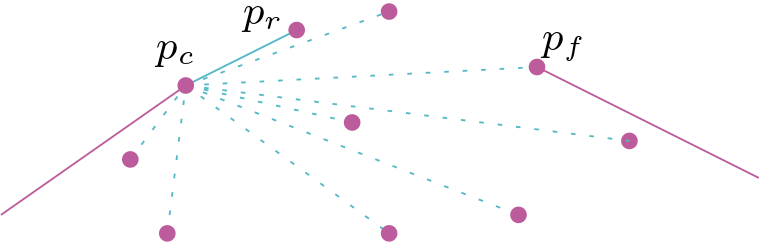}
    \caption{ An example of our restricted Jarvis March convex hull algorithm. 
     The purple lines are edges that we know are on the convex hull, and the points are the content of some block $S_j$. In a Jarvis March algorithm, we start our search from $p_c$ which is known to be on the convex hull, then search for the point that forms the maximum angle with the incoming edge containing $p_c$. In the above example, the edges we check are in blue, and the solid line denotes the edge that is added to the convex hull. We repeat the search again starting at $p_r$, until we connect to $p_f$.}
    \label{fig:jarvis-march}
\end{figure}

% We characterize
% the peformance for this truncated quantum Jarvis as follows.

\begin{lemma}
\label{lemma:q-jarvis-march}
    Let $S$ be a set of $m$ points and $p_s$, $p_f \in S$ be points on the convex hull of $S$. The part of the convex hull from $p_s$ to $p_f$ in the clockwise direction can be computed in $\Tilde{O}(h\sqrt{m})$ time, where $h$ is the number of points on the convex hull from $p_s$ to $p_f$. 
\end{lemma}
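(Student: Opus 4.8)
The plan is to read Algorithm~\ref{alg:q-jarvis-march} as a quantum implementation of the classical Jarvis (gift-wrapping) march in which each linear-time angular search is replaced by a single invocation of quantum maximum finding, and then to prove correctness and the running-time bound separately. For correctness I would establish the standard gift-wrapping invariant: at the start of each pass through the while loop, $(p_s, p_c)$ is an edge of the upper hull and $p_c$ is the most recently discovered hull vertex. Given this edge, the next hull vertex in clockwise order is exactly the point $p_r$ that maximizes the turning angle at $p_c$ measured from the incoming direction $(p_s, p_c)$, which is precisely what the ordering $\preceq_{deg(p_s, p_c)}$ encodes. A short argument shows the angle-maximizer is the unique point such that all of $S$ lies on one side of the line through $p_c$ and $p_r$, so $(p_c, p_r)$ is a hull edge and the invariant is re-established under the update $(p_s, p_c) \leftarrow (p_c, p_r)$. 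The loop is seeded correctly because $p_c$ is initialized to the endpoint of the bridge entering the block (found by the \texttt{qMax} on $f_c$) and terminates when $p_c$ reaches $p_f$, the endpoint of the bridge leaving the block (found by the \texttt{qMax} on $f_f$); since $p_s$ and $p_f$ are assumed to lie on the hull, the march is guaranteed to reach $p_f$.

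For the running time I would first verify that $\preceq_{deg(p_s, p_c)}$ supports $O(\log w)$-time comparisons so that Theorem~\ref{thm:min-finding} applies: comparing the angles of two candidate points reduces to a constant number of orientation (cross-product) tests on $w$-bit coordinates, costing $O(\log w)$. Each iteration therefore makes one \texttt{qMax} call over the $m$ points, which by Theorem~\ref{thm:min-finding} runs in $\tilde{O}(\sqrt{m})$ time. The while loop discovers exactly one new hull vertex per iteration, and there are $h$ such vertices between $p_s$ and $p_f$, so the loop executes $O(h)$ times; the two preliminary \texttt{qMax} calls for $f_c$ and $f_f$ each cost $\tilde{O}(\sqrt{m})$ and are absorbed into the total. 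Multiplying gives the claimed $\tilde{O}(h\sqrt{m})$ bound.

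The main obstacle I expect is controlling the failure probability across the chained quantum searches, since a single incorrect maximum would corrupt the entire remainder of the march. I would handle this by invoking the high-probability version of Durr--Hoyer maximum finding, where the logarithmic cost of boosting each call to success probability $1 - m^{-c}$ is hidden inside the $\tilde{O}$, and then taking a union bound over the $O(h)$ calls to conclude that the whole march succeeds with high probability. A secondary subtlety is degeneracy (collinear points or ties in the angular order); following the paper's simplifying assumption of distinct $x$-coordinates, ties can be broken lexicographically inside the comparator without affecting the asymptotic bound.
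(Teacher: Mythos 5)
Your proposal is correct and follows essentially the same route as the paper, which proves this lemma only by pointing to the quantum-assisted Jarvis march of Lanzagorta and Uhlmann (Lemma~\ref{lemma:q-jarvis-march1} in the appendix): one Grover/D\"urr--H{\o}yer maximum-finding search of cost $\tilde{O}(\sqrt{m})$ per discovered hull vertex, repeated $h$ times. The details you add beyond the paper's sketch --- the gift-wrapping invariant, the $O(\log w)$ comparator via orientation tests, and boosting each \texttt{qMax} call so a union bound over the $O(h)$ chained searches controls the overall failure probability --- are exactly the standard ingredients left implicit in the paper's $\tilde{O}$ and ``w.h.p.''\ claims, and they are handled correctly.
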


% \subsubsection{Putting the Pieces Together}

We now combine the above pieces to describe our complete
quantum convex hull algorithm. 
See Figure~\ref{fig:convex-hull-alg}.
As was the case in the maxima set algorithm, we begin with a guess for the upper bound of $h$, then during the conquer step detect whether or not we have exceeded this bound. 
A convex hull requires at least 3 points, so we use 4 as our initial guess, as it is the smallest power of 2 greater than 3. 
If we exceed our guess, then we double our estimate for $h$ and rerun the algorithm. 
Thus, the running times across all our guesses form a geometric sum that is $\tilde{O}(\sqrt{nh})$.
See Algorithm~\ref{alg:convex-hull-h}.

\begin{figure}[htb]
    \centering
    \includegraphics[width=.75\linewidth]{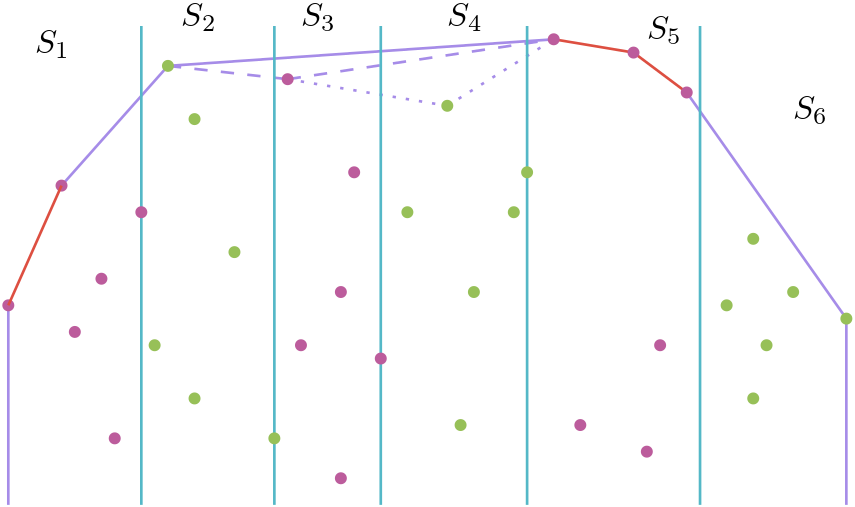}
    \caption{An upper hull with $n = 36$ and $h = 6$. Consecutive blocks of points are colored in alternating purple and green. The bridge edges discovered are shown in purple, and the dotted and dashed lines starting from block $S_2$ indicate how the algorithm handles left turns. The bridge edge between $S_4$ and $S_5$ forms a left turn relative to the previous dotted bridge edge, so both are popped and a new bridge edge is found between $S_3$ and $S_5$. This is repeated until no left turns are formed, giving the solid purple line. Finally, the bridge edges do not close the hull in some blocks, which is where we run the quantum Jarvis march to find the edges of the upper hull shown in red. }
    \label{fig:convex-hull-alg}
\end{figure}

\begin{algorithm}[hbt]
\caption{QuantumConvexHull($U_{in}$)}
\label{alg:convex-hull-h}
\begin{algorithmic}[1]
\Require $U_{in}$: A unitary that encodes the lexicographically sorted data $[d_0,\ldots,d_{n-1}]$
\State Let $h = 4$. 
\State {\bf while} True {\bf do}
% \State Prepare $\ket{S_j} = $\texttt{qPrep}($j$, $h$) for $j \in [h]$
\State \xxx $B$ = \texttt{FindBridgeEdges($h$)}
\State \xxx {\bf for} $j \in [0, 1, \ldots, h-1]$ {\bf do}
\State \xxx\xxx \texttt{QuantumBlockJarvisMarch}($j, h, B$)
\State \xxx {\bf if} {the total number of points found exceeds $h$} {\bf then}
\State \xxx\xxx $h = 2h$
\State \xxx {\bf else}
\State \xxx\xxx Print all the points in the output register of each block and terminate the loop.
% \EndIf 
% \EndWhile
\end{algorithmic}
\end{algorithm}

\begin{theorem}
There is a quantum algorithm
    for finding the convex hull of a set of $n$ points in lexicographic order 
that runs in $\Tilde{O}(\sqrt{nh})$ time, where $h$ is the size of
the output.
\end{theorem}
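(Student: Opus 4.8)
The plan is to assemble the running-time bound from the three pieces already established, and then argue that wrapping everything inside the doubling loop of Algorithm~\ref{alg:convex-hull-h} preserves the $\tilde{O}(\sqrt{nh})$ bound. First I would fix a guess value $h'$ for the loop and bound the cost of one iteration. The combine step is a single call to \texttt{FindBridgeEdges}$(h')$, which by Lemma~\ref{lemma:bridge-edges} runs in $\tilde{O}(\sqrt{nh'})$ time. The conquer step is the loop over blocks $j \in [0,h'-1]$, each invoking \texttt{QuantumBlockJarvisMarch}$(j,h',B)$; by Lemma~\ref{lemma:q-jarvis-march} applied with $m = n/h'$, block $j$ costs $\tilde{O}(h_j\sqrt{n/h'})$, where $h_j$ is the number of additional upper-hull vertices found inside $S_j$. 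Summing over blocks and using $\sum_j h_j \le h'$ gives $\tilde{O}(h'\sqrt{n/h'}) = \tilde{O}(\sqrt{nh'})$, so a single iteration costs $\tilde{O}(\sqrt{nh'})$ overall.

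Next I would handle the doubling loop. Since the algorithm starts at $h'=4$ and doubles whenever the number of points found exceeds the current guess, the loop terminates at the first power of two with $h' \ge h$, hence the final guess satisfies $h' < 2h$. The total cost is the geometric sum $\sum_{k} \tilde{O}(\sqrt{n\,2^k})$ taken over the guesses $2^k \le 2h$; because the terms grow geometrically, this sum is dominated by its largest term and is therefore $\tilde{O}(\sqrt{nh})$. I would also note that the outer loop executes $O(\log n)$ times, a factor absorbed into the $\tilde{O}$. The argument for the lower hull is symmetric, and $CH(S) = UH(S) \cup LH(S)$, so the full convex hull is obtained at no extra asymptotic cost.

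The step I expect to require the most care is justifying the termination and correctness of the doubling loop rather than the raw time bound. Specifically, one must argue that when the current guess $h'$ is at least the true output size $h$, the conquer phase actually recovers the entire hull, so that ``the total number of points found'' does not exceed $h'$ and the loop correctly halts. This relies on two facts: that \texttt{FindBridgeEdges} correctly returns every bridge edge of the $h'$-block partition (the left-turn pruning in Algorithm~\ref{alg:bridge-edges} discards only edges that cannot lie on $UH(S)$, mirroring the Graham-scan invariant), and that the restricted Jarvis march in each block fills in precisely the hull vertices lying strictly between consecutive bridge endpoints. Given those two facts, correctness of the union follows. The remaining obstacle is a minor technical one: each quantum subroutine (\texttt{qLP}, \texttt{qMax}) succeeds only with high probability, so I would take a union bound over the $O(h)$ subroutine calls per iteration and the $O(\log n)$ iterations, choosing the polynomial success probability in Lemma~\ref{lemma:linear-program} large enough that the whole algorithm succeeds with high probability.
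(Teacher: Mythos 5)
Your proposal is correct and follows essentially the same route as the paper's proof: a single iteration is bounded by Lemma~\ref{lemma:bridge-edges} for \texttt{FindBridgeEdges} plus Lemma~\ref{lemma:q-jarvis-march} with $\sum_j h_j \le h$ for the per-block Jarvis marches, and the doubling guesses form a geometric sum absorbed into $\tilde{O}(\sqrt{nh})$. Your added care about termination of the doubling loop, correctness of the left-turn pruning, and the union bound over subroutine failure probabilities goes beyond what the paper's terse proof spells out, but it is consistent with (and a reasonable strengthening of) the same argument.
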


\begin{proof}
    As observed, the outer loop will run at most $\log n$ times. 
    % The quantum data loading step takes $O(\log(n/h))$ time. 
    By lemma \ref{lemma:bridge-edges}, \texttt{FindBridgeEdges} takes $\Tilde{O}(\sqrt{nh})$ time. By lemma \ref{lemma:q-jarvis-march}, \texttt{QuantumJarvisMarch} takes $\Tilde{O}(h_j \sqrt{n/h})$ time
    for block $S_j$, where $h_j$ is the number of points from $S_j$ that lie on the convex hull of the entire points set $S$. 
    Since $\sum_j h_j \le h$, the total running time for all the calls to \texttt{QuantumJarvisMarch} is $\Tilde{O}( \sqrt{nh})$. 
% \sandy{I added a bit here.}
    Finally, printing the $h$ discovered points takes $O(h)$ time. 
\end{proof}

In an appendix, we give a lower bound for quantum convex hull that shows
our algorithm is optimal up to polylogarithmic factors.

\section{Discussion}
% In this paper, we describe a quantum combine-and-conquer technique for
% efficiently
% solving problems using a divide-and-conquer approach where we
% perform the essential computation of the combine step before performing
% the conquer steps.
% We applied this to the maxima set and convex hull problems, but we feel
% there could be other problems where this paradigm could apply, which
% defines an interesting direction for future work.
% 
Afshani, Peyman, and Chan~\cite{afshani2017instance} give an
instance-optimal classical algorithm for finding 2D convex hulls,
which adapts the algorithm
by Kirkpatrick and Seidel~\cite{kirkpatrick} to run in 
$O(n(\mathcal{H}(S)+1))$ time, where $\mathcal{H}(S)$ is 
a classical-computing structural entropy measure 
that is never more than $O(\log h)$.
An interesting direction for future work would be to determine if there
is a quantum convex hull algorithm that is instance optimal 
based on a quantum structural entropy measure.

\clearpage 

\bibliographystyle{plainurl}
\bibliography{refs}
% \clearpage
\begin{appendix}
\section{A Review of a Classical Maxima Set Algorithm}
For completeness, we describe a classical maxima set algorithm in
Algorithm~\ref{alg:classical-maxima-set},
assuming the points have distinct $x$-coordinates.

\begin{algorithm}[ht]
\caption{ClassicalMaximaSet($S$, $M$)}
\label{alg:classical-maxima-set}
\begin{algorithmic}[1]
\State \textbf{if} {$n\le 1$} \textbf{return} $M=S$.
\State \textbf{Divide step:} Divide $S$ into $S_1$ and $S_2$ of size 
at most $\lceil n/2\rceil$ each, such that the points of $S_1$
have smaller $x$-coordinates than those in $S_2$.
\State \textbf{Conquer step:}
\State Recursively call ClassicalMaximaSet($S_1$, $M_1$).
\State Recursively call ClassicalMaximaSet($S_2$, $M_2$).
\State \textbf{Combine step:}
\State Let $p_{\rm max}$ be the point in $S_2$ with largest $y$-coordinate
\State Remove all points from $M_1$ with $y$-coordinates smaller than 
$p_{\rm max}$ 
and concatenate the list of remaining points with $M_2$, returning this
as $M$.
\end{algorithmic}
\end{algorithm}

The last step (Step 9) works because all of the points in $S_2$ have a larger $x$-coordinate than all the points in $S_1$. Therefore, if a point $p$ in $S_2$ has a larger $y$-coordinate than a point $q$ in $S_1$, then $p$ dominates $q$. 
% \sandy{I added the previous two sentences for clarity.}
As mentioned above, the running time 
of this classical algorithm is easily seen to be 
$O(n\log n)$.

\section{A Review of a Classical Convex Hull Algorithm}
Algorithm~\ref{alg:classical-ch} finds the upper hull of a 
set, $S$, of lexicographically sorted points in the plane, where we assume,
for the sake of simplicity, that points have distinct $x$-coordinates.

\begin{algorithm}[ht]
\caption{ClassicalUpperHull($S$, $U$)}
\label{alg:classical-ch}
\begin{algorithmic}[1]
\If {$n\le 1$} 
\State \textbf{return} $U=S$.
\EndIf
\State \textbf{Divide step:} Divide $S$ into $S_1$ and $S_2$ of size 
at most $\lceil n/2\rceil$ each, such that the points of $S_1$
have smaller $x$-coordinates than those in $S_2$.
\State \textbf{Conquer step:}
\State Recursively call ClassicalUpperHull($S_1$, $U_1$).
\State Recursively call ClassicalUpperHull($S_2$, $U_2$).
\State \textbf{Combine step:}
\State 
\label{step-lp}
Find a \emph{bridge} upper tangent edge, $e=(v,w)$,
such that $v\in S_1$ and $w\in S_2$ 
no point of $S$ is above the line $\overline{vw}$.
\State Remove all points from $U_1$ (resp., $U_2$) below $e$
and concatenate the list of remaining points of $U_1$ with $e$ and 
the remaining points of $U_2$, returning this
as $U$.
\end{algorithmic}
\end{algorithm}

By a well-known point-line duality,
the bridge edge in Step~\ref{step-lp}
can be found by a solving a two-dimensional linear program;
see, e.g., \cite{deberg,seidel,orourke}.
In this duality,
each point $p$ with coordinate $(p_x, p_y)$ in the primal plane maps to a line $y = p_x x - p_y$ in the dual plane and vice versa. 
See Figure~\ref{fig:duality}.

\begin{figure}[hbt!]
    \centering
    \includegraphics[width=.9\linewidth]{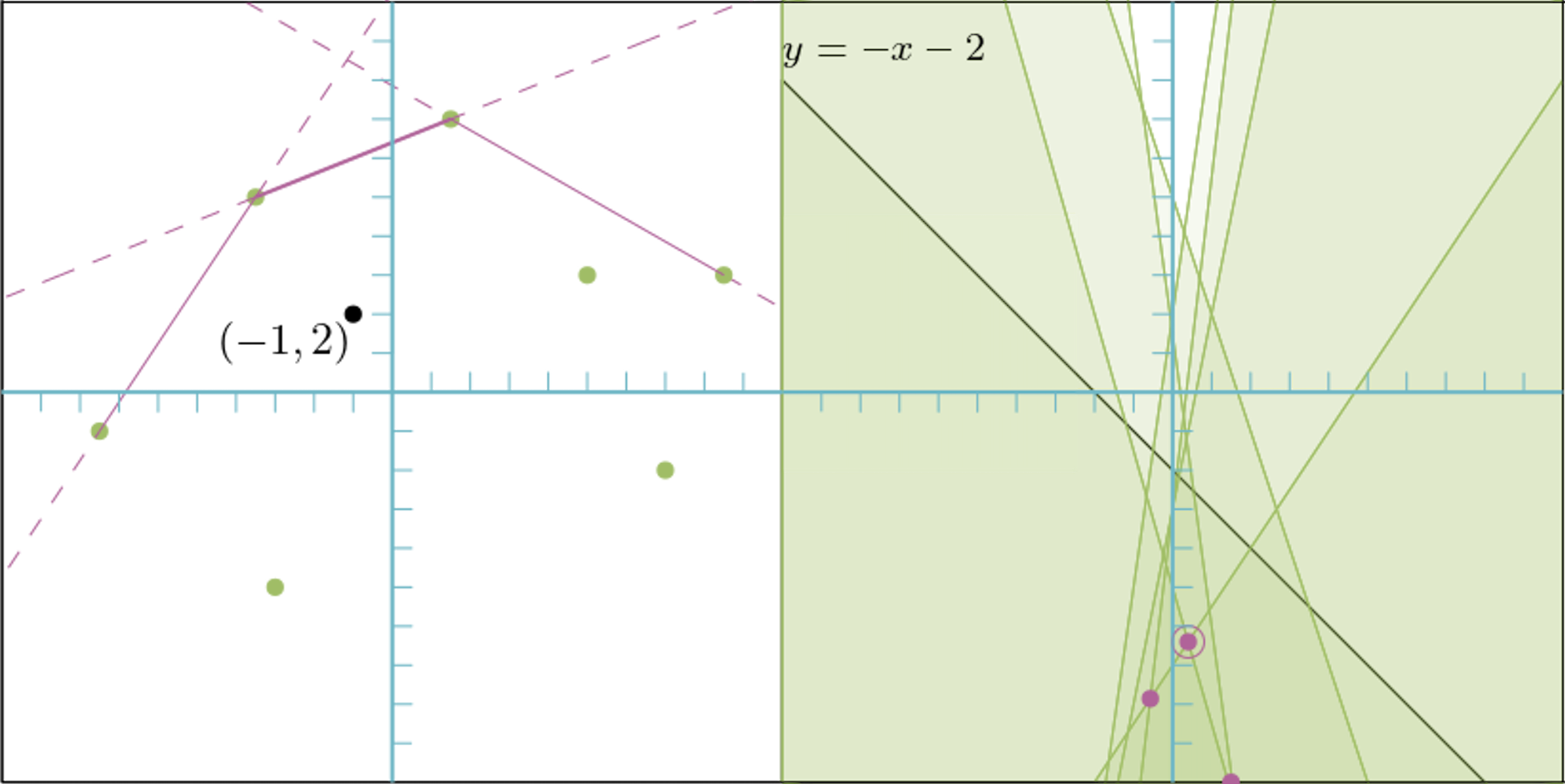}
    \caption{An example of point-line duality for the convex hull problem.
The left configuration is the primal plane containing the points of interest in green, and the lines containing the convex hull in purple. 
The solid portions are edges of the convex hull, 
and we would like to find the edge that intersects the blue vertical line $L$. 
As mentioned, each point $p$ with coordinate $(p_x, p_y)$ in the primal plane maps to a line $y = p_x x - p_y$ in the dual plane and vice versa. 
An example of this mapping is shown in black in the right configuration. 
The purple lines containing the upper hull of the points in the primal plane map to the lower envelope of the lines in the dual plane indicated by purple dots.}
    \label{fig:duality}
\end{figure}

\section{Quantum Jarvis March Convex Hull Algorithms}
As mentioned above,
another well-known classical algorithm for computing a two-dimensional
convex hull is the \emph{Jarvis march} algorithm;
see, e.g., \cite{deberg,seidel,preparata}.
In this algorithm, we start with a point, $p$, known to be on the convex 
hull, such as a point with smallest $y$-coordinate. Then, we find
the next point on the convex hull in a clockwise order, by a simple
linear-time maximum-finding step. We iterate this search, winding around
the convex hull, until we return to the starting point.
Since each iteration takes $O(n)$ time, this algorithm 
runs in $O(nh)$ time.

Lanzagorta and Uhlmann \cite{lanzagorta2004quantum}
describe a quantum assisted version of the Jarvis march algorithm.
As in the classical version,
this quantum Jarvis march convex hull algorithm starts with an edge, $(p, q)$,
known to be on the convex hull, and then does searches over the remaining 
points to find the point $r$ such that the angle formed between $(p, q)$ 
and $(q, r)$ is maximized. 
In the quantum implementation, each iteration
is performed using a Grover maximum-finding search~\cite{grover1996fast}, which runs in
$\tilde{O}(\sqrt{n})$ time.
Thus, in this quantum implementation, we
perform $h$ Grover searches for the point maximizing the angle 
until we find the starting point. 
For completion, we summarize this method in
Algorithm~\ref{alg:q-jarvis-march1} and characterize
its performance in Lemma~\ref{lemma:q-jarvis-march1}.

\begin{algorithm}
\caption{QuantumJarvisMarch($U_{in}$)}\label{alg:q-jarvis-march1}
\begin{algorithmic}[1]
\Require $U_{in}$: a unitary that encodes the data $[d_0, \ldots, d_{n-1}]$
\State Prepare a superposition of the indices and data using $U_{in}$.
\State Find an edge of the convex hull $(p_s, p_c)$.
\State Let $p_f = p_s$ denote our starting (and final) convex hull point.
\State {\bf while} $p_c \neq p_f$ {\bf do}
\State \xxx Grover search for the point $p_r$ that maximizes the angle between $(p_s, p_c)$ and $(p_c, p_r)$. 
\State \xxx Add $p_r$ to the set of output convex hull points for $S$.
\State \xxx Let $(p_s, p_c) = (p_c, p_r)$. 
% \EndWhile 
\end{algorithmic}
\end{algorithm}

\begin{lemma}[Lanzagorta and Uhlmann \cite{lanzagorta2004quantum}]\label{lemma:q-jarvis-march1}
Let $S$ be a set of $m$ points.
The convex hull of $S$ in the clockwise direction 
can be computed in $\Tilde{O}(h\sqrt{m})$ time using a quantum computer, 
where $h$ is the number of points on the convex hull.
\end{lemma}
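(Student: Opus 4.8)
The plan is to analyze the gift-wrapping procedure of Algorithm~\ref{alg:q-jarvis-march1}, replacing each linear scan of the classical Jarvis march by a single quantum maximum-finding call, and to account for both the running time and the probability of correctness. First I would establish the starting edge claimed in line~2: by Theorem~\ref{thm:min-finding} one can find in $\tilde{O}(\sqrt{m})$ time an extreme point $p_s$ (say the point of minimum $y$-coordinate), which necessarily lies on the convex hull; pairing it with a virtual horizontal incoming direction and running one angle-maximization step yields the first genuine hull edge $(p_s, p_c)$. Fix $p_f = p_s$ as both the first hull vertex and the termination sentinel.

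The heart of the argument is that each iteration of the while loop---given the current hull edge $(p_s, p_c)$, find the point $p_r$ maximizing the angle between $(p_s, p_c)$ and $(p_c, p_r)$---is exactly one invocation of quantum maximum-finding over all $m$ points. The comparator $\preceq$ here orders candidate points by the turn angle they induce relative to the directed edge $(p_s, p_c)$, and comparing two candidates reduces to a constant number of orientation (cross-product) tests on their coordinates, so each comparison runs in $O(\log w)$ time on $w$-bit coordinates. With state preparation via $U_{in}$ and the comparator both costing $\tilde{O}(1)$, Theorem~\ref{thm:min-finding} gives each step a running time of $\tilde{O}(\sqrt{m})$.

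Correctness of a single step is the standard gift-wrapping invariant: among all points of $S$, the one maximizing the exterior turn angle from the current edge is the next vertex of the hull in clockwise order, since no other point can lie outside the supporting line through $p_c$ in that direction. Thus the loop visits the hull vertices in clockwise order and terminates after exactly $h$ iterations upon returning to $p_f$. Summing the per-iteration cost then gives the claimed total of $h \cdot \tilde{O}(\sqrt{m}) = \tilde{O}(h\sqrt{m})$, with the $\tilde{O}(\sqrt{m})$ initialization subsumed.

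The main subtlety---and the only place the analysis goes beyond bookkeeping---is error control, since each maximum-finding call of Theorem~\ref{thm:min-finding} is a bounded-error routine and a single wrong answer would send the march off the hull, corrupting every subsequent step. I would handle this by boosting each of the at most $h \le m$ calls to success probability $1 - 1/\mathrm{poly}(m)$ via $O(\log m)$ repetitions (keeping the best candidate under $\preceq$), then union-bounding over all iterations so that the entire march is correct with high probability. This amplification multiplies the cost by only an $O(\log m)$ factor, which is absorbed into the $\tilde{O}$ notation, leaving the final bound $\tilde{O}(h\sqrt{m})$ intact.
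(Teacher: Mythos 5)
Your proof matches the paper's own treatment: the paper (following Lanzagorta and Uhlmann) likewise runs the Jarvis march with each angle-maximization step replaced by one Grover/D\"urr--H\o yer maximum-finding call at cost $\tilde{O}(\sqrt{m})$, iterated $h$ times around the hull. Your additional details---finding the starting edge via an extreme point and amplifying each bounded-error call by $O(\log m)$ repetitions with a union bound---are exactly the bookkeeping the paper leaves implicit inside the $\tilde{O}$ notation, so the argument is correct and essentially identical in approach.
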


In our quantum convex hull algorithm, we use a method similar to 
this quantum Jarvis march algorithm as a subroutine, albeit
in a non-standard way.

\section{A Lower bound for Quantum Convex Hulls}

In this appendix, we show that our algorithm is optimal up to polylogarithmic factors assuming a sorted input and oracle access to the data. A key ingredient of this proof is the lower-bound on unstructured search given oracle access to a database, summarized below. 

\begin{theorem}\label{thm:search-lower-bound}
    Let $S$ be a set of size $N$ and $f: S \rightarrow \{0, 1\}$ be a Boolean function such that $f(x) = 1$ for a unique $x \in S$ and $U_{in}$ be a unitary providing oracle access to $f$. Then, to find $x$, a quantum computer requires at least $\Omega\left(\sqrt{N}\right)$ queries to $U_{in}$. 
\end{theorem}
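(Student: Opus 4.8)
The plan is to prove this by the standard hybrid argument of Bennett, Bernstein, Brassard, and Vazirani, which shows that a quantum algorithm cannot distinguish two oracles differing at a single position unless it places substantial query amplitude on that position. First I would fix an arbitrary algorithm making $T$ queries, modeling it as a sequence of input-independent unitaries $V_0, V_1, \ldots, V_T$ interleaved with oracle calls. Let $\ket{\psi_t}$ denote the state after $t$ queries when the algorithm is run against the trivial oracle $O_0$ encoding $f \equiv 0$ (no marked element), and let $\ket{\psi_t^x}$ denote the state after $t$ queries against the oracle $O_x$ encoding the function that is $1$ only at $x$. For each index $x$, let $E_x$ be the projector onto the subspace of basis states whose query register holds $x$, so that $\sum_x \|E_x \ket{\psi_t}\|^2 = 1$ for every $t$.

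The heart of the argument is a perturbation lemma: the discrepancy between the two final states is controlled by the total query amplitude the algorithm ever places on the marked position. Concretely, I would prove by induction on $t$ that
\[
 \bigl\| \ket{\psi_t^x} - \ket{\psi_t} \bigr\| \le 2 \sum_{s=0}^{t-1} \bigl\| E_x \ket{\psi_s} \bigr\|.
\]
The inductive step uses two facts: the common unitaries $V_s$ are isometries and hence preserve the norm of a difference of states, and the two oracles $O_x$ and $O_0$ act identically except on the range of $E_x$, where their difference has operator norm at most $2$. Running against the trivial oracle makes the query amplitudes $\|E_x\ket{\psi_s}\|$ the natural quantities to track, since $\ket{\psi_s}$ is the same for all $x$.

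Next I would sum the perturbation bound over all $N$ indices and apply Cauchy--Schwarz in the index variable, using that the $E_x$ partition the query register:
\[
 \sum_{x} \bigl\| \ket{\psi_T^x} - \ket{\psi_T} \bigr\| \le 2\sum_{s=0}^{T-1} \sum_x \bigl\| E_x \ket{\psi_s}\bigr\| \le 2 \sum_{s=0}^{T-1} \sqrt{N}\,\Bigl(\textstyle\sum_x \|E_x\ket{\psi_s}\|^2\Bigr)^{1/2} = 2T\sqrt{N}.
\]
Then I would invoke correctness: if the algorithm outputs $x$ with probability at least a fixed constant on each oracle $O_x$, its measurement statistics on $\ket{\psi_T^x}$ must differ noticeably from those on $\ket{\psi_T}$, which cannot favor any particular index; this forces $\|\ket{\psi_T^x} - \ket{\psi_T}\| \ge c$ for a constant $c>0$ and a constant fraction of indices $x$. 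Summing gives $\sum_x \|\ket{\psi_T^x} - \ket{\psi_T}\| = \Omega(N)$, and comparing with the upper bound $2T\sqrt{N}$ yields $T = \Omega(\sqrt{N})$.

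I expect the main obstacle to be a fully rigorous proof of the perturbation lemma—in particular, correctly justifying the factor-of-two operator-norm bound on $O_x - O_0$ and verifying that the shared unitaries never amplify the accumulated error. The remaining steps, namely the Cauchy--Schwarz estimate and the reduction from success probability to state distinguishability, are routine once the lemma is established.
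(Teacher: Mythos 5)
The paper contains no proof of Theorem~\ref{thm:search-lower-bound}: it is stated as the known query lower bound for unstructured search (the result of Bennett, Bernstein, Brassard, and Vazirani, equivalently the optimality of Grover's algorithm) and is invoked as a black box in the convex-hull lower bound that follows, so there is no paper-internal argument to compare against. Your proposal is the canonical hybrid-argument proof of exactly this statement, and it is correct in outline. The two points you flag as obstacles are standard and go through: writing $O_x = O_0 + (O_x - O_0)$, the difference operator is supported on the range of $E_x$ (tensored with the answer register) and, being a difference of two unitaries, has operator norm at most $2$, which yields the inductive step because the interleaved unitaries $V_s$ are norm-preserving, so the triangle inequality accumulates the per-query errors without amplification. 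For the final step, note that the output distribution of the algorithm run against the all-zero oracle has total mass $1$ over indices, so all but constantly many $x$ receive output probability at most, say, $1/10$ under $\ket{\psi_T}$; constant success probability on $O_x$ then forces constant total-variation distance between the two output distributions, and since total-variation distance is bounded by the trace distance, which for pure states is at most $\|\ket{\psi_T^x}-\ket{\psi_T}\|$, you get $\|\ket{\psi_T^x}-\ket{\psi_T}\|\ge c$ for all but $O(1)$ indices; summing and comparing with your upper bound $2T\sqrt{N}$ gives $T=\Omega(\sqrt{N})$. One cosmetic difference from the original BBBV treatment: they apply Cauchy--Schwarz over the $T$ time steps for a fixed hybrid, whereas you apply it over the $N$ indices at each time step; both routes give the same bound.
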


For the lower bound, we consider a lexicographically sorted set of $n$ points whose convex hull contains $h + 1$ points lying on a parabola, and $h$ points lying above that parabola. 
We also assume a setting where the points on the convex hull appear every $n/h$ points, which we will call partition points. Furthermore, there will be a point on the convex hull between every pair of partition points. What remains is to find these remaining points on the convex hull. This problem reduces to finding the nearest point to the parabola below the chord formed by consecutive partition points. By Theorem 9, if each block has a unique winner to be found, we require at least $\Omega\left(\sqrt{n/h}\right)$ queries to find a winner per block. Since there are $h$ blocks, we see that we require $\Omega\left(\sqrt{nh}\right)$ queries to the oracle to find the convex hull. 
(See Figure \ref{fig:lower-bound}).
% \shion{If we are using parallel oracle accesses in the main body, is this still the correct lower bound? I was thinking maybe a setting where $h - 1$ blocks do not have a winner, and one block has $h$ winners requiring $h$ Grover searches.}

\begin{figure}[htb]
    \centering
    \includegraphics[width=.8\linewidth]{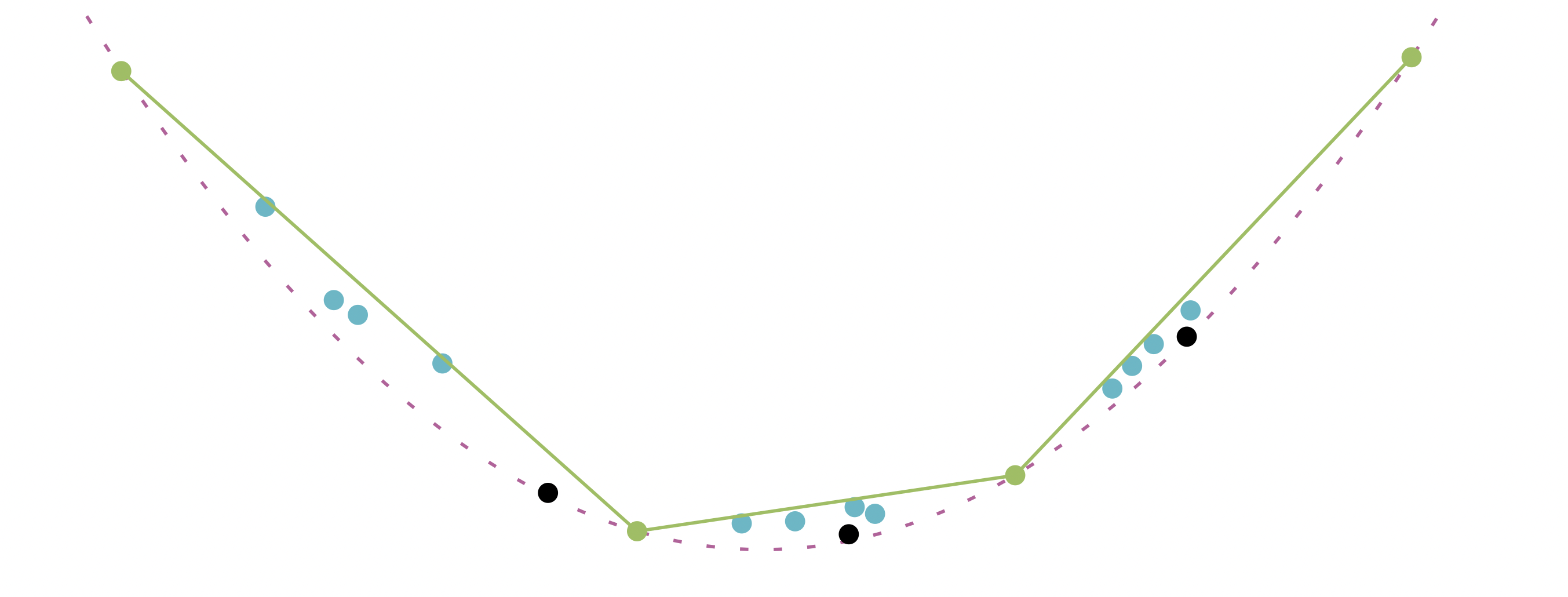}
    \caption{A set of $n = 19$ points with 7 points on the convex hull. There are $h + 1= 4$ points that lie on the parabola (in green) assumed to be given to us, and $h$ points on the convex hull lying between the chords and parabola. Now the objective is to find within each block, the point marked in black. }
    \label{fig:lower-bound}
\end{figure}

\end{appendix}

\end{document}